\pgfplotsset{compat=1.18}
\newtheorem{theorem}{Theorem}
\newtheorem*{theorem*}{Theorem}
\newtheorem*{claim*}{Claim}
\newtheorem{remark}{Remark}
\newtheorem*{lemma*}{Lemma}
\newtheorem{definition}{Definition}
\newtheorem{lemma}[theorem]{Lemma}
\newtheorem*{corollary*}{Corollary}
\newtheorem{proposition}[theorem]{Proposition}
\newtheorem{conjecture}[theorem]{Conjecture}
\newcommand{\no}{\Pi_{\text{NO}}}
\newcommand{\yes}{\Pi_{\text{YES}}}
\newcommand{\Cov}{\mathrm{Cov}}
\newcommand{\Pp}{{\mathcal{P}}}
\newcommand{\M}{{\mathcal{M}}}
\newcommand{\calC}{{\mathcal{C}}}
\newcommand{\N}{\mathbb{N}}
\newcommand{\eps}{\varepsilon}
\newcommand{\R}{\mathbb{R}}
\newcommand{\E}{\mathop \mathbb{E}}
\newcommand{\la}{\lambda}
\newcommand{\al}{\alpha}
\newcommand{\be}{\beta}
\newcommand{\ep}{\varepsilon}
\newcommand{\del}{\delta}
\renewcommand{\sp}{\mathbb{S}}
\renewcommand{\r}{\mathbb{R}}
\newcommand{\K}{\mathcal{K}}
\newcommand{\BB}{\mathfrak{B}}
\newcommand{\cp}{\mathbb{C}}
\title{Geometric Covering using Random Fields}
\author{Felipe Gon\c{c}alves}
\address{Department of Mathematics, IMPA}
\email{goncalves@impa.br}
\author{Daniel Keren}
\address{Department of Computer Science, University of Haifa}
\email{dkeren@cs.haifa.ac.il}
\author{Amit Shahar}
\address{Department of Computer Science, University of Haifa}
\email{ashaha16@campus.haifa.ac.il}
\author{Gal Yehuda}
\address{Department of Computer Science, Technion-IIT}
\email{ygal@technion.ac.il}
\begin{document}

\begin{abstract}
A set of vectors $S \subseteq \R^d$ is $(k_1,\varepsilon)$-clusterable if there are $k_1$ balls of radius $\varepsilon$ that cover $S$. 
A set of vectors $S \subseteq \R^d$ is $(k_2,\delta)$-far from being clusterable if there are at least $k_2$ vectors in $S$, with all pairwise distances at least $\delta$. 
We propose a probabilistic algorithm to distinguish between these two cases.
Our algorithm reaches a decision by only looking at the extreme values of a \emph{scalar valued} hash function, defined by a \emph{random field}, on $S$; hence, it is especially suitable in distributed and online settings.
An important feature of our method is that the algorithm is oblivious to the number of vectors: in the online setting, for example, the algorithm stores only a constant number of scalars, which is independent of the stream length.

We introduce random field hash functions, which are a key ingredient in our paradigm. 
Random field hash functions generalize \emph{locality-sensitive hashing} (LSH). 
In addition to the LSH requirement that ``nearby vectors are hashed to similar values", our hash function also guarantees that the ``hash values are (nearly) independent random variables for distant vectors". 
We formulate necessary conditions for the kernels which define the random fields applied to our problem, as well as a measure of kernel optimality, for which we provide a bound.
Then, we propose a method to construct kernels which approximate the optimal one.
\end{abstract}

\maketitle

\section{Introduction} \label{section:introduction}
The question whether a set of vectors $S$ in Euclidean space is ``clusterable'' is of considerable interest in numerous applications. 
It is also known to be hard, in its various formulations.
For example, the \emph{geometric set cover problem} is $\textsc{NP}$-complete even in dimension two.

We consider the following promise-problem version: given a set $S \subseteq \R^d$, can $S$ be covered by $k_1$ balls of radios $\varepsilon$, or are there at least $k_2$ vectors in $S$ with mutual distances at least $\delta$?
The two cases are shown pictorially in Figures \ref{fig:condition_c},\ref{fig:condition_m}.

We motivate our approach by presenting two basic examples.
\subsection{Generalized Distinct Count}
Consider the following version of the classical Flajolet-Martin (FM hereafter) algorithm \cite{fl1985}, which allows to efficiently, and in constant space, count distinct items in a data stream. 
A uniform hash function from the object space $W$ to the unit interval $h:W \rightarrow I \triangleq [0,1]$ is defined, and the number of distinct items is estimated by  $\frac{1}{\min(h(x))_{x \in S}}$. 
Standard ``stabilization'' techniques (e.g. median of averages) can be applied to render this estimate more accurate and robust. 
As the estimate uses only the minimal value of the hash function, it requires only a constant-size communication overhead to implement in a distributed setting, as the nodes can agree on $h$ in advance, and can only broadcast a fixed-size table required to obtain a stable estimate of the minimum \cite{DBLP:conf/stoc/AlonMS96}.

Assume now that $W$ is not composed of discrete objects, but scalars (real numbers or vectors with scalar components). 
In this case, typically, objects whose corresponding scalars are very close to each other, can be considered identical, or nearly identical, and it makes little sense to count them as distinct elements.
Instead, we would like to view close vectors as a single element, and count the tight clusters they form. 
In this context, clustering can be seen as a generalization of the Distinct Count problem (``how many distinct elements are in a given set?''), when generalized to a continuous space, with ``distinct'' replaced by ``of distance greater than some threshold''.

\subsection{Inferring on the Size of a Distributed Set}
Assume a large set of data vectors are distributed among nodes, and these nodes wish to determine, with high probability, whether the union of their data sets can be covered by a ball of radius $\varepsilon$, or whether an ``opposite" condition holds -- for example, that the union contains at least two vectors whose distance from each other is at least $(2+c)\varepsilon$, for some $c>0$.
Further, the decision is to be taken with minimal communication overhead.

Our proposed algorithm will achieve the desired goal with communication overhead that is independent of the overall size of the data, and is constant for each node, regardless of the number of data vectors.
Actually, we will answer a more general question -- whether the vectors can be covered by a certain number of balls, or the ``opposite" condition, being the presence of a set of vectors with  pairwise distance greater than a given threshold.

\begin{figure}
    \centering
    \begin{minipage}{0.45\textwidth}
        \centering
        \includegraphics[scale=0.6]{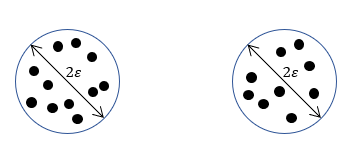}
        \vspace{0.65cm}
        \caption{ An example of clusterable data. That is, data that can be covered with 2 balls with radius $\varepsilon$.}
    \label{fig:condition_c}
    \end{minipage}
    \begin{minipage}{0.45\textwidth}
        \centering
        \includegraphics[scale=0.5]{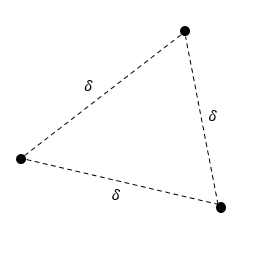}
        \vspace{-0.45cm}
        \caption{An example of unclusterable data: three vectors with pairwise distance $\delta>2\varepsilon$.}
    \label{fig:condition_m}
    \end{minipage}
\end{figure}

\subsection{Our Contribution} 
We propose a probabilistic approach for the clustering promise problem. 
Our main contribution is bringing together ideas from the theory of random fields and streaming algorithms. 
More precisely, we show how to use random fields in order to construct a hash function $h : \R^d \rightarrow \R$ with the following properties. 

\begin{enumerate}
    \item For every $x \in \R^d$, $h(x)$ is a random variable, with identical distribution.
    \item Roughly speaking, if $\|x-y\| \leq \varepsilon$, then $|h(x)-h(y)|$ is ``small''. 
    This will ensure that ``tight'' clusters will be mapped to nearby values. 
    This property also plays a major role in the \emph{locality sensitive hashing} (LSH) scheme \cite{DBLP:conf/stoc/IndykM98}. \label{hash_function:cond1}
    \item If $\|x-y\| \geq \delta$ , $h(x)$ and $h(y)$ should be ``nearly independent''. 
    This property ensures that for $x,y$ which are far apart, $h$ behaves like the hashing used in FM, i.e. it will treat $x,y$ as distinct objects. 
    This property is not required in LSH, and indeed does not hold for the hash functions used in its implementation (Section \ref{section:efficient_random_fields}).
    \label{hash_function:cond2}
\end{enumerate}

Intuitively, a hash function satisfying the above conditions can answer the clustering promise problem: for distant vectors $x,y$, the random variables $h(x)$ and $h(y)$ are ``almost'' independent. 
In this case, similarly to FM hashing, some simple property of the distribution of $h(x)$ (e.g. its maximum over some set) carries enough information about the structure of the set (e.g. its cardinality).
On the other hand, for close vectors, the random variables $h(x)$ and $h(y)$ are highly correlated.
This property allows to treat close vectors as if they are identical.

We show that the function $h$ can be defined using a random field. 
While we defer the formal definition for later, a random field over $\R^d$ can be thought of as a function that maps each $x \in \R^d$ to random variable $f(x)$ over $\R$. 
We show explicitly how to construct such random fields.

There is an inherent tension between Condition ~\ref{hash_function:cond1} and ~\ref{hash_function:cond2}: while Condition ~\ref{hash_function:cond1} requires $h(x)$ and $h(y)$ to have ``large'' covariance if $\|x-y\| \leq \varepsilon$, Condition ~\ref{hash_function:cond2} requires $h(x)$ and $h(y)$ to have ``small'' covariance if $\|x-y\| > \delta$. 
Ideally, the covariance function of $h(x)$ and $h(y)$, as a function of the distance, should ``drop as fast as possible" between the distance values $\varepsilon$ and $\delta$. 
However, this cannot be achieved by the trivial solution of taking a step function which equals 1 at $\varepsilon$ and 0 at $\delta$, as it will not define a distribution, since the corresponding covariance matrix will not be positive semidefinite; more on this later (Section ~\ref{section:lower_bound_optimal_kernels}).

We analyze the rate at which the covariance function can decrease, prove lower bounds, and show how to attain a solution which is near optimal.

\subsection{Previous Work}
A work by Alon et al. \cite{testingofclustering} investigates the clustering problem in the context of property testing. 
There are similarities in the problem definition, in particular, the goal is to answer whether a set of vector is clusterable, or ``far'' from being clusterable.
However, there is a key difference between the property testing setting and the promise problem we consider.
The objective in the property testing setting is to answer the if an ``unknown'' set $S \subseteq \R^d$ is clusterable or far from being clusterable, while querying as few vectors from $S$ as possible.
More generally, the goal of a property testing algorithm is testing whether a set $S$ satisfies a property, or is $\varepsilon$-far from satisfying a property, and the objective is to make less than $|S|$ queries to $S$ -- ideally, the number of queries is independent of $|S|$ and depends only on $\varepsilon$. 
The running time, as a function of $\varepsilon$, is usually less significant.
Our goal in this work, however, is to devise algorithms for answering the clustering promise problem quickly, while having full access to the dataset. 

Additional work in the context of property testing is given in \cite{helly_theorems_propertytesting}.
The presented approach is also applicable to other symmetric shapes (not just balls), but it only addresses the problem of covering with a \emph{single} shape. 
We have not been able to find work along the major direction of this paper, i.e. the application of random fields to the clusterability problem.

For the simpler problem of \emph{distinct count} (which is equivalent to covering with balls of zero radii), a well-known approach was presented in FM \cite{fl1985}. 
One version of FM is to randomly hash the dataset $S$ into the interval $[0,1]$.
A standard exercise proves that if there are $m$ distinct elements, the expectation of the minimum value of the hash is $\frac{1}{m+1}$.
While our approach has a similar flavor -- we look at the maximal value of a hash function -- it is different in a crucial aspect: FM hashing is oblivious to the distance between vectors, i.e. even when provided with a very tight cluster, its output will still depend only on the number of vectors in the cluster; this is due to the fact that the values of the hash function at any distinct vector are independent random variables.


The property of mapping nearby vectors to similar values might remind one of \emph{locality sensitive hashing} (LSH) method \cite{DBLP:conf/stoc/IndykM98}, especially the version presented in \cite{DBLP:conf/compgeom/DatarIIM04}, which deals with real-valued input. 
However, while the requirement ``nearby vectors are hashed to similar values" is indeed satisfied in LSH, the requirement ``the hash values are (nearly) independent for far apart vectors" is not; hence, as will be elaborated later, LSH is inappropriate for our problem. 
Still, there appears to be an interesting relation between LSH and one of the hash functions we propose to use (Section ~\ref{section:efficient_random_fields}).

\subsection{Paper Structure}
After introducing some basic notions in Section ~\ref{section:perlim}, we present,
in Section ~\ref{section:hash_function_clustring}, our main contribution and
algorithm. 
Section ~\ref{section:efficient_random_fields} discusses a new class of random fields which can also be applied to our problem and that are easier to construct. 
Sections ~\ref{section:lower_bound_optimal_kernels}, ~\ref{section:higher_dimenstion}, ~\ref{section:numerical_eval} are technical in nature, and concern
the construction of optimal kernels, in one and many dimensions. 
Conclusions are offered in Section ~\ref{section:conclusions}.

\section{Preliminaries} \label{section:perlim}
In this section we fix notations, formally define our problem, and give standard definitions from the theory of random fields.   

\subsection{Notations}
All vector norms are $\ell^2$-norms, unless otherwise stated. 
The inner product of vectors $x,y$ is denoted $x \cdot y$.
For a vector $v \in \R^d$, and $\varepsilon \in \R_{+}$, we denote by $B(v, \varepsilon)$ a ball of radius $\varepsilon$ and center $v$. 
In places where the center is irrelevant, we omit the center from the notation.

\subsection{Clustering promise problem}
We define the clustering promise problem $\Pi = \Pi_{\text{YES}} \cup \Pi_{\text{NO}}$. 
Let $S \subseteq \R^d$, $k_1 \in \N$ and $\varepsilon \in \R_{+}$. 

\begin{definition}
    Given a set $S \subseteq \R^d$ and parameters $k_1 \in \N$, $\varepsilon \in \R_{+}$, we say that $S$ is \emph{$(k_1,\varepsilon)$-clusterable} if there exist $k_1$ balls $\mathcal{B} = \{B_1, \ldots, B_{k_1}\}$ of radius $\varepsilon$, such that for all $x \in S$ there exists a ball $B \in \mathcal{B}$ such that $x \in B$. 
\end{definition}

We denote by $\Pi_{\text{YES}} = \{S \subseteq \R^d : S \text{ is } (k_1, \varepsilon) \text{ clusterable}\}$. 
For clarity, we omit the parameters $\varepsilon, k_1$ from the set notation.

\begin{definition}
    Given a set $S \subseteq \R^d$ and parameters $k_2 \in \N$, $\delta \in \R_{+}$, we say that $S$ is \emph{$(k_2,\delta)$-far from being clusterable} if there are at least $k_2$ vectors in $S$, with all pairwise distances at least $\delta$.
\end{definition}
We denote by $\Pi_{\text{NO}} = \{S \subseteq \R^d : S \text{ is } (k_2, \delta) \text{-far from being clusterable}\}$.

\subsection{The Relation Between the Conditions \texorpdfstring{$\Pi_{\text{YES}}$}{Py} and \texorpdfstring{$\Pi_{\text{NO}}$}{Pn}}
The complementary nature of the conditions is realized by the following Lemma.

\begin{lemma} \label{lemma:relation}
If $S \subset \R^d$ has at most $\ell$ vectors with pairwise distance at least $\delta$, then its minimal cover with balls of radius $\varepsilon$ is of size at most  $\ell c(d)\left(\frac{\delta}{\varepsilon}\right)^d$, where $c(d)$ depends only on $d$.
\end{lemma}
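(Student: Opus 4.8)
The plan is to reduce the statement to the standard fact that a Euclidean ball of radius $\delta$ can be covered by $O_d\big((\delta/\varepsilon)^d\big)$ balls of radius $\varepsilon$, using a packing/covering argument. First I would pass to a \emph{maximal} $\delta$-separated subset $P \subseteq S$, i.e.\ a subset all of whose pairwise distances are at least $\delta$ and which is maximal with respect to inclusion among such subsets (obtained greedily when $S$ is finite, or via Zorn's lemma in general). By the hypothesis, $|P| \le \ell$. The key point of maximality is that every $x \in S$ lies within distance less than $\delta$ of some $p \in P$; otherwise $P \cup \{x\}$ would be a strictly larger $\delta$-separated set. Hence
\[
S \subseteq \bigcup_{p \in P} B(p, \delta),
\]
a union of at most $\ell$ balls of radius $\delta$.

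Next I would cover each ball $B(p,\delta)$ by balls of radius $\varepsilon$. For this I again take a maximal $\varepsilon$-separated subset $Q$ of $B(p,\delta)$: by maximality the balls $\{B(q,\varepsilon)\}_{q \in Q}$ already cover $B(p,\delta)$, while the balls $\{B(q,\varepsilon/2)\}_{q \in Q}$ are pairwise disjoint and all contained in $B(p,\delta+\varepsilon/2)$. Comparing volumes gives
\[
|Q| \;\le\; \frac{\mathrm{vol}\big(B(\delta+\varepsilon/2)\big)}{\mathrm{vol}\big(B(\varepsilon/2)\big)} \;=\; \left(\frac{2\delta}{\varepsilon}+1\right)^{d} \;\le\; c(d)\left(\frac{\delta}{\varepsilon}\right)^{d},
\]
where, since in our regime $\varepsilon \le \delta$, one may take $c(d)=3^d$ (any constant depending only on $d$ suffices for the statement as written).

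Finally I would combine the two steps: $S$ is covered by at most $\ell$ balls of radius $\delta$, and each of those is covered by at most $c(d)(\delta/\varepsilon)^d$ balls of radius $\varepsilon$, so $S$ admits a cover by at most $\ell\, c(d)(\delta/\varepsilon)^d$ balls of radius $\varepsilon$. There is no serious obstacle here; the only technical ingredient is the volumetric covering bound for a single ball, which is entirely standard. The one step worth stating carefully is the maximality argument, since that is precisely what converts the combinatorial hypothesis (no large $\delta$-separated subset) into a genuine geometric covering of $S$.
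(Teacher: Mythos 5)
Your proposal is correct and follows essentially the same two-step route as the paper: cover $S$ by at most $\ell$ balls of radius $\delta$ centered at a maximal $\delta$-separated subset, then cover each such ball by $c(d)(\delta/\varepsilon)^d$ balls of radius $\varepsilon$. The only difference is that you prove the single-ball covering bound via the standard volumetric packing argument, whereas the paper cites it from the literature.
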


\begin{proof}
Suppose there are $\ell$ vectors with mutual distances at least $\delta$ in $S$, but not $\ell+1$ such vectors. 
Then, $S$ can be covered by balls of radius $\delta$ around the $\ell$ vectors (since if these balls do not cover $S$, there will be a vector at distance at least $\delta$ from all the $\ell$ vectors, a contradiction).

Next, we cover these $\delta$-balls with $\varepsilon$-balls (Fig. \ref{fig:condition_c_proof}). The number of $\varepsilon$-balls required is equal to $c(d)\left(\frac{\delta}{\varepsilon}\right)^d$, where $d$ is the dimension of the ambient Euclidean space and $c(d)$ a function which depends only on $d$ \cite{DBLP:journals/dcg/Verger-Gaugry05}. 
Multiplying by $\ell$ completes the proof.
\end{proof}

\begin{figure}
  \begin{minipage}[c]{0.5\textwidth}
  \centering
    \includegraphics[scale=0.5]{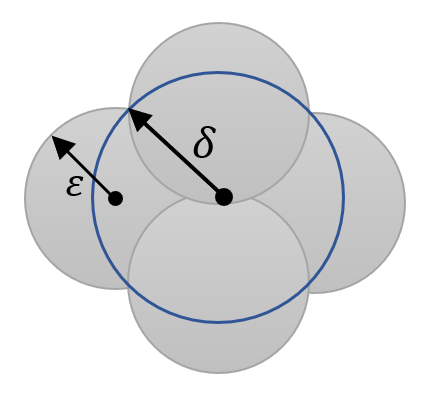}
  \end{minipage}\hfill
  \begin{minipage}[c]{0.5\textwidth}
  \centering
    \caption{A sketch of the proof of Lemma \ref{lemma:relation}. First, $S$ is covered with balls of radius $\delta$; then, these balls are covered with balls of radius $\eps$.}
     \label{fig:condition_c_proof}
  \end{minipage}
\end{figure}

\subsection{Random Fields}
We recall a few definitions relating to the notion of a random field.

\begin{definition}
(Random Field, \cite{adler2009random,randomfields_physics,brain_fields}). 
Given a probability space $(\Omega, \mathcal{F}, P)$,  a random field $T(x)$ defined on $\R^d$ is a function such that for every $x \in \R^d$, $T(x)$ is a random variable on $(\Omega, \mathcal{F}, P)$.
\end{definition}

\begin{definition}
(Covariance of a random field, \cite{adler2009random}). 
The covariance function $\Cov\left(x,y\right)$ of a random field is defined as 
$$\Cov(x,y) = \E[(T(x) - \E[T(x)])(T(y) - \E[T(y)])]$$
\end{definition}

A random field is {\em isotropic and stationary} if there exists a function $f$ such that $\Cov(x,y) = f(\|x-y\|)$ \footnote{In fact, it is possible to define these two properties separately, but in this work we are only interested in fields which are both isotropic and stationary.}. 
Since we are interested in properties which depend only on the distances between the vectors of $S$, we will only consider such fields.

\section{Hash Functions for Geometric Clustering} \label{section:hash_function_clustring}
At the foundation of our algorithm is a random field which, loosely speaking, maps nearby vectors to similar values, while the images of far away vectors are mapped to independent values. 
We note here that the ability of a random field to map nearby vectors to nearby values is restricted by the following result \cite{Lin2011}:
\begin{theorem} Let $f(t)$ be the characteristic function of a non-degenerate
distribution. 
Then there are positive constants $\alpha,\beta$ such that $f(t) \leq 1-\alpha t^2$ for $|t| \leq \beta$.
\end{theorem}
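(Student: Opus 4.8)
The plan is to reduce the statement to a quantitative non-degeneracy estimate for the \emph{symmetrization} of the underlying law; this both sidesteps the fact that a characteristic function is in general complex-valued and removes any need for a finite second moment. Write $f(t)=\E[e^{itX}]$ where $X$ is a non-degenerate random variable, let $X'$ be an independent copy of $X$, and set $Y=X-X'$. Then $Y$ is symmetric and $|f(t)|^2=\E[e^{itY}]=\E[\cos(tY)]$, so that $1-|f(t)|^2=\E[1-\cos(tY)]\ge 0$; moreover $Y$ is itself non-degenerate, since $\P(Y=0)=1$ would force the law of $X$ to have an atom of mass one, i.e. $X$ degenerate. It therefore suffices to bound $\E[1-\cos(tY)]$ below by a constant multiple of $t^2$ on a neighbourhood of $0$: once $1-|f(t)|^2\ge \alpha' t^2$ there, we get $f(t)\le|f(t)|\le\sqrt{1-\alpha' t^2}\le 1-\tfrac12\alpha' t^2$.

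For the lower bound I would use the elementary trigonometric inequality $1-\cos u\ge c\,u^2$ for $|u|\le\pi$ (one may take $c=2/\pi^2$, by checking the endpoints and the sign of the derivative). Applying it with $u=tY$ and discarding the event $\{|tY|>\pi\}$ gives, for every $t\ne 0$,
\[
1-|f(t)|^2\;\ge\; c\,t^2\,\E\!\left[Y^2\,\mathbf{1}_{\{|Y|\le \pi/|t|\}}\right].
\]
Now I would invoke the non-degeneracy of $Y$ once more: since $Y$ is not almost surely zero, the truncated second moment $m(M):=\E[Y^2\mathbf{1}_{\{|Y|\le M\}}]$ is strictly positive for all $M$ large enough, so fix $M_0$ with $\sigma_0^2:=m(M_0)>0$. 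For $|t|\le \pi/M_0$ the truncation event contains $\{|Y|\le M_0\}$, hence $\E[Y^2\mathbf{1}_{\{|Y|\le\pi/|t|\}}]\ge\sigma_0^2$, and the display yields $1-|f(t)|^2\ge c\,\sigma_0^2\,t^2$.

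Putting this together, $|f(t)|^2\le 1-c\,\sigma_0^2\,t^2$ for $|t|\le\pi/M_0$; choosing $\beta=\min\{\pi/M_0,\ (c\,\sigma_0^2)^{-1/2}\}$ keeps the right-hand side in $[0,1]$, and $\sqrt{1-x}\le 1-x/2$ then gives $f(t)\le |f(t)|\le 1-\tfrac12 c\,\sigma_0^2\,t^2$ for $|t|\le\beta$, i.e. the claim with $\alpha=\tfrac12 c\,\sigma_0^2$. The delicate point to watch is that one cannot obtain this by a naive second-order Taylor expansion of $f$ at $0$, since $\E[X^2]$ may be infinite and $f$ need not be twice differentiable; the truncation in the middle step is exactly what makes the argument unconditional. (When $\E[X^2]<\infty$ one could instead read off $\Re f(t)=1-\tfrac12\E[X^2]t^2+o(t^2)$ with $\E[X^2]>0$, but the symmetrization route above is shorter and fully general, and in the paper's setting $f$ is real and even so $f(t)\le|f(t)|$ is automatic.)
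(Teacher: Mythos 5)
Your proof is correct and complete. Note first that the paper does not actually prove this statement --- it is quoted as a known result with a citation to Lin's book on characteristic functions --- so there is no in-paper argument to compare against; what you have supplied is a self-contained proof of the cited fact, and it is the standard one. All the steps check out: the symmetrization $Y=X-X'$ gives $|f(t)|^2=\E[\cos(tY)]$ and $Y$ inherits non-degeneracy (if $\P(Y=0)=1$ then $\sum_x \P(X=x)^2=1$, forcing a single atom of mass one); the inequality $1-\cos u\ge (2/\pi^2)u^2$ on $|u|\le\pi$ holds with equality at the endpoints and is verified by the concavity/convexity analysis you indicate; the truncated second moment $m(M_0)>0$ for some finite $M_0$ follows from $\P(Y\ne 0)>0$; and the restriction $|t|\le \pi/M_0$ makes the truncation event contain $\{|Y|\le M_0\}$. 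Your choice of $\beta$ keeps $1-c\sigma_0^2t^2$ nonnegative so that $\sqrt{1-x}\le 1-x/2$ applies, and the final passage from $|f(t)|$ to $f(t)$ is legitimate since the inequality in the statement only makes sense when $f(t)$ is interpreted as a real quantity (as it is in the paper's application, where $f$ is the covariance of a real stationary field, hence real and even). You are also right to flag that a naive Taylor expansion would silently assume $\E[X^2]<\infty$; the truncation is exactly what makes the bound unconditional, which is the content of the theorem as stated.
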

Intuitively speaking, this result limits the capability of a random field to
``compress'' the images of nearby vectors.

 A natural candidate for a random field is the \emph{Gaussian
  Random Field}(GRF), a key construct in probability, applied mathematics, and physics
\cite{adler2009random, brain_fields}. 
The Gaussian random field (Definition \ref{def:grf}) is also obtaining -- up to a constant -- the best ``compressibility'' possible. 



\subsection{Gaussian Random Fields}
\begin{definition} \label{def:grf}
  A (stationary) Gaussian Random Field on a set $D \subset \R^d$ is a class of random
  real-valued functions on $D$, satisfying the following:\\For every
  $x_1 \ldots x_n, x_i \in D$, the vector $(f(x_1)... f(x_n))$ obeys a normal
  distribution with zero mean and a covariance matrix $C$ given by $C_{i,j} = K(x_i,x_j)$,
  for some positive definite kernel $K(,)$.
\end{definition} 
  
Recall that a kernel is positive definite if it satisfies $\int_{D} K(x,y)f(x)f(y)dxdy \geq 0$ for every $f \in L_2(D)$; this also means that for every set of vectors $x_1 \ldots x_n$ in $D$, the matrix $K(x_i,x_j)$ is positive definite. 
Since our only interest is in \emph{stationary and isotropic} kernels -- i.e., which
depend only on $\|x-y\|$ -- we make the standard assumption that $K(x,y) \triangleq K\left(\|x-y\| \right)$.

The simplest -- yet very important -- GRF, is the one defined by
\[ K(t) = \exp\left(-\lambda\|t\|^2\right) \text{ , }\]
for some constant $\lambda > 0$. 
In this case, for each $x \in D$, $f(x)$ is a standard normal random variable, and the covariance matrix of $(f(x_1)... f(x_n))$ is given by $C_{i,j} = \exp\left(-\lambda \|x_i - x_j\|^2\right)$.
Note that there are two Gaussian functions involved: the first defines the covariance matrix of the second normal distribution. 
We shall refer to this field as the \emph{basic GRF}. 
In Sections ~\ref{section:lower_bound_optimal_kernels} and ~\ref{section:higher_dimenstion}, more general GRF's will be studied.

As demonstrated in Figure ~\ref{fig:GRF}, for both one and two dimensional fields, as $\lambda$ increases, the field become more ``rough'', and the correlation between its values at $x,y$ drops faster as $\|x-y\|$ increases.

\begin{figure}[ht!]
    \centering  
    \includegraphics[scale=0.15]{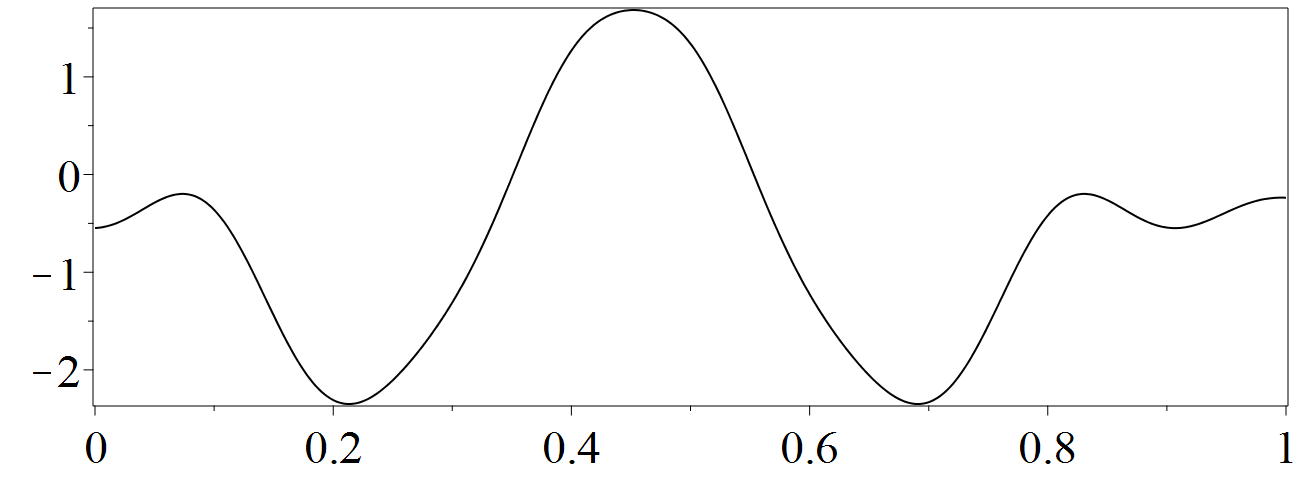}   \includegraphics[scale=0.15]{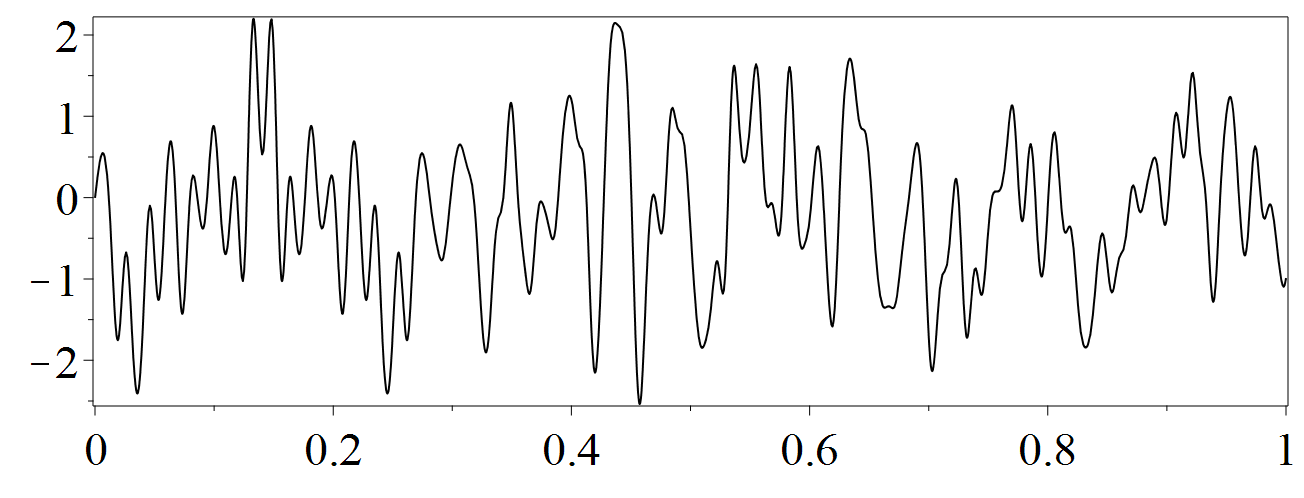} 
    \includegraphics[scale=0.5]{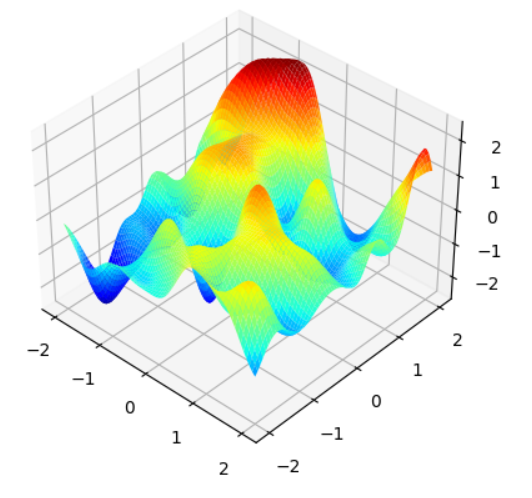} 
    \hspace{1cm}
    \includegraphics[scale=0.5]{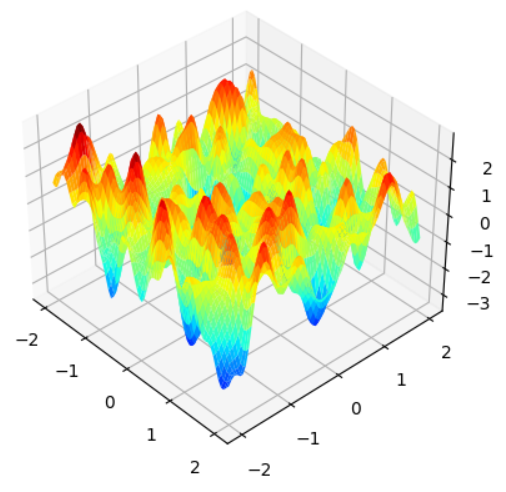} 

    \caption{Examples of Gaussian random fields. Top: 1D field, $\lambda=100$. Second: 1D field, $\lambda=10000$. Third: 2D field, $\lambda=100$. Bottom: 2D field, $\lambda=10000$.}
    \label{fig:GRF}
\end{figure}

Once a GRF is defined, we can use its \emph{maximal value} on $S$ in order to decide whether property $\Pi_{\text{YES}}$ or $\Pi_{\text{NO}}$ holds, as will be described next. 
We note that if $S$ is bounded (e.g. in a ball of fixed radius), the maximal value remains bounded even \emph{as the cardinality of $S$ tends to infinity}, in stark contrast to the FM hashing, for which the estimate of the cardinality (i.e. the inverse of the minimal value) will tend to infinity.

The proposed solution searches for the optimal values of the GRF parameter
$\lambda$, as well as that of a threshold $T$, which maximize the following for
a random $g$ drawn from the basic GRF (where $\max(g)$ always refers to the maximal value over $S$):
\begin{equation}\label{eq:opt1}
\Pr\left(\max(g) \geq T \mid  \text{case} \; \Pi_{\text{NO}} \right) -
\Pr\left(\max(g) \geq T \mid  \text{case} \; \Pi_{\text{YES}} \right)
\end{equation}

Noting that $\Pi_{\text{NO}}$ and $\Pi_{\text{YES}}$ are satisfied by a large class of sets $S$, in this work we follow a ``worst case scenario'', and maximize the expression in Equation \eqref{eq:opt1} under the following assumptions:

\begin{itemize}
    \item Case $\Pi_{\text{YES}}$: $S$ consists of $k_1$ balls of radius $\varepsilon$, i.e. it is \emph{maximal} (with respect to set containment) among all sets satisfying $\Pi_{\text{YES}}$. 
    Further, we assume no lower bound on the pairwise distances between these $k_1$ balls.
    \item Case $\Pi_{\text{NO}}$: $S$ consists of exactly $k_2$ vectors, which are equidistant from each other, all distance equal to $\delta$.
    Since we seek a general bound on  the field's maximal value, we will ignore the fact that this condition cannot always hold; for example, there are no 4 equidistant vectors in $\R^2$.
\end{itemize}

Given the optimal parameter values, the decision whether $\Pi_{\text{YES}}$ or $\Pi_{\text{NO}}$ holds proceeds by estimating Equation \eqref{eq:opt1},  by drawing a fixed number of random fields, and computing the empirical probability of their maxima to exceed $T$. 
Note that if $S$ is distributed among nodes, it suffices for each node to send an estimate of the maximum of the GRF on its part of $S$ to a coordinator node, which then simply takes the maximum of these local maxima; thus the algorithm is very appropriate for a distributed and/or dynamic setting, as the communication overhead does not depend on the cardinality of $S$.

There remains the question of how to compute the two probabilities
in Equation \eqref{eq:opt1}, which we address next.
One technical tool we use is the Slepian Lemma, for GRFs. 
\begin{theorem} The Slepian Lemma \cite{slepian}: let\\$(X_1 \ldots X_n)$ and
    $(Y_1 \ldots Y_n)$ be normal random variables with zero expectation, and further
    satisfying $E[X_k^2] = E[Y_k^2]$ for every $k$, and
    $E[Y_kY_{\ell}] \geq E[X_kX_{\ell}]$ for every $k,\ell$. 
    Then, for every $T$, $\Pr(\max{X_k} \geq T) \geq \Pr(\max{Y_k} \geq T)$.
\end{theorem}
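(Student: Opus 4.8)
The plan is to prove Slepian's Lemma by the classical \emph{Gaussian interpolation} argument, reducing the statement about maxima to a one‑parameter family of Gaussian expectations whose $t$-derivative has a fixed sign. First I would reformulate the conclusion: since $\Pr(\max_k X_k \ge T) = 1 - \Pr(\forall k:\ X_k < T)$, it is enough to show $\Pr(\forall k:\ X_k < T) \le \Pr(\forall k:\ Y_k < T)$ for every $T$. Write $A_{k\ell} := \E[X_kX_\ell]$ and $B_{k\ell} := \E[Y_kY_\ell]$, so the hypotheses read $A_{kk} = B_{kk}$ and $B_{k\ell} \ge A_{k\ell}$ for all $k,\ell$. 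Taking $X$ and $Y$ independent, define the interpolant $Z(t) := \sqrt{1-t}\,X + \sqrt{t}\,Y$ for $t \in [0,1]$; this is a centered Gaussian vector with covariance $\Sigma(t) = (1-t)A + tB$, so $Z(0)$ and $Z(1)$ have the laws of $X$ and $Y$ respectively, and crucially the diagonal of $\Sigma(t)$ is \emph{constant} in $t$. Fix a smoothing parameter $\eta>0$ and a smooth non‑increasing function $g:\R \to [0,1]$ with $g\equiv 1$ on $(-\infty, T-\eta]$ and $g\equiv 0$ on $[T,\infty)$; set $f(x) := \prod_{k=1}^n g(x_k)$, a bounded $C^\infty$ function all of whose derivatives are bounded.

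The core step is the interpolation identity
\begin{equation*}
\frac{d}{dt}\,\E\bigl[f(Z(t))\bigr] \;=\; \frac12\sum_{i,j}\bigl(B_{ij}-A_{ij}\bigr)\,\E\bigl[\partial_i\partial_j f(Z(t))\bigr].
\end{equation*}
I would establish this by differentiating under the expectation sign and then invoking Gaussian integration by parts, namely $\E[Z_i\,h(Z)] = \sum_j \Sigma(t)_{ij}\,\E[\partial_j h(Z)]$ for a centered Gaussian $Z$ with covariance $\Sigma(t)$; to make the density smooth I would first replace $\Sigma(t)$ by $\Sigma(t)+\epsilon I$ (which leaves $B_{ij}-A_{ij}$ for $i\neq j$ and the equality of diagonals unchanged), run the argument, and let $\epsilon\to0$ at the end. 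Because $B_{ii}=A_{ii}$ the diagonal terms vanish, leaving only the $i\neq j$ terms. For the product test function one computes $\partial_i\partial_j f(x) = g'(x_i)\,g'(x_j)\prod_{k\neq i,j} g(x_k) \ge 0$, since $g'\le 0$ makes the two derivative factors contribute a nonnegative product while the remaining factors lie in $[0,1]$. Combined with $B_{ij}-A_{ij}\ge 0$, every summand is nonnegative, hence $\frac{d}{dt}\E[f(Z(t))]\ge 0$, and therefore $\E[f(X)] = \E[f(Z(0))] \le \E[f(Z(1))] = \E[f(Y)]$.

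Finally I would pass to the limit $\eta\downarrow 0$: then $g(s)\to \mathbf 1[s<T]$ pointwise except at $s=T$, and $\{X_k=T\}$, $\{Y_k=T\}$ have probability zero for the (regularized, then genuinely) nondegenerate Gaussians, so bounded convergence gives $\E[f(X)]\to \Pr(\forall k:\ X_k<T)$ and $\E[f(Y)]\to \Pr(\forall k:\ Y_k<T)$. This yields $\Pr(\forall k:\ X_k<T)\le \Pr(\forall k:\ Y_k<T)$ and hence $\Pr(\max_k X_k\ge T)\ge \Pr(\max_k Y_k\ge T)$, as claimed. (An alternative route is via Plackett's formula for the partial derivatives of the multivariate normal CDF with respect to the off‑diagonal correlations, but the interpolation argument is the most self‑contained.) I expect the main obstacle to be the rigorous justification of the interpolation identity — interchanging $d/dt$ with the expectation uniformly in $t$ and carrying out the Gaussian integration by parts, including the $\epsilon$‑regularization needed when $\Sigma(t)$ is only positive semidefinite; once that identity is in place, the sign analysis exploiting the product structure of $f$ is routine.
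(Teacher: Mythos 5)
Your proof is correct. Note that the paper does not actually prove this theorem: it states Slepian's Lemma as a known tool and cites Slepian's original work, so there is no in-paper argument to compare against. What you have written is the standard modern Gaussian-interpolation proof, and the details check out: the reformulation via $\Pr(\forall k:\ X_k < T)$, the interpolant $Z(t)=\sqrt{1-t}\,X+\sqrt{t}\,Y$ with constant diagonal, the interpolation identity obtained from Gaussian integration by parts, the sign analysis $\partial_i\partial_j f = g'(x_i)g'(x_j)\prod_{k\neq i,j}g(x_k)\ge 0$ for the product of non-increasing factors, and the $\epsilon$-regularization for possibly degenerate covariances are all in order; the direction of the inequality also matches the statement (the more correlated vector $Y$ has the stochastically smaller maximum). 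The only place I would ask for slightly more care is the final limit $\eta\downarrow 0$ combined with $\epsilon\downarrow 0$: after removing the regularization the law of $X$ may have atoms on the hyperplanes $\{x_k=T\}$, so you should pass to the limit through the monotone family of events $\{\forall k:\ X_k\le s\}$, $s\uparrow T$ (using that the inequality holds for every threshold and that these probabilities are left-continuous in $s$), rather than appealing to bounded convergence directly. This is a routine repair and does not affect the validity of the argument. Incidentally, your interpolation identity is exactly the mechanism the paper gestures at, far less rigorously, in its Section on the Slepian-type result for the sine field, where the mixed second derivative of $\max(u,v)$ is treated as a delta function; your smooth test-function version is the clean way to do that computation.
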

That is, if the expectation and variance of $X_k,Y_k$ are equal for every $k$,
and all $X_k,X_{\ell}$ pairs are ``less correlated'' than $Y_k,Y_{\ell}$, then the probability of $\max{X_k}$ to exceed $T$ is larger than that that of $\max{Y_k}$ to exceed $T$, for every $T$. Surprisingly, this intuitive result is \emph{not} true for
all types of random variables.  

\subsection{Probability for the Maximum of the GRF to Exceed \texorpdfstring{$T$}{T} on a set Containing \texorpdfstring{$k_2$}{k2} vectors with all Mutual Distances \texorpdfstring{$\geq \delta$}{> d}.}
This problem reduces to that of computing the so-called exceedance probability for a $k_2$-dimensional normal distribution with zero expectation and a covariance matrix $C$ satisfying $C_{k,k}=1$ for all $k$, and $k \neq \ell \rightarrow C_{k,\ell} \leq \exp\left(-\lambda \delta^2\right)$. 
The following theorem allows to bound this probability from below by
solving for the case $C_{k,k}=1, C_{k \neq \ell}=\exp\left(-\lambda \delta^2\right)$: \\

Due to the Slepian Lemma, we need only to compute the probability for
the maximum to exceed $T$ for a normal vector $(X_1 \ldots X_{k_2})$ with
zero expectation and covariance matrix
\(C_{k,k}=1, C_{k \neq \ell} =\exp\left(-\lambda \delta^2\right)\). 
This probability can be computed as follows: denote $\exp\left(-\lambda \delta^2\right)$  by $\rho$. 
The random vector $(X_1 \ldots X_{k_2})$ can be modeled by
\[x_i = \sqrt{\rho}N_0+\sqrt{1-\rho}N_i, i=1 \dots k_2, \]
where $N_0,N_1 \ldots N_{k_2}$ are i.i.d standard normal random variables (it is easy to verify that each $X_i$ is a standard normal random variable, and that \mbox{$i \neq j \rightarrow \Cov\left(X_i,X_j\right) = \rho$)}.
For any $T_1$, the probability density of \mbox{$\max{\{\sqrt{1-\rho}N_1, \ldots, \sqrt{1-\rho}N_{k_2}\}}$} to equal $T_1$ is given by
\[
\frac{\sqrt{2}\, {\mathrm e}^{-\frac{\mathit{T_1}^{2}}{2 \left(1-\rho \right)}}}{2 \sqrt{\pi}\, \sqrt{1-\rho}}
\left(\frac{\mathrm{erf}\! \left(\frac{\sqrt{2}\, \mathit{T_1}}{2 \sqrt{1-\rho}}\right)}{2}\right)^{k_2-1},
\]
and the probability for the maximum of $x_i$ to exceed $T$ is therefore given by

\begin{eqnarray*}
\int_{-\infty}^{\infty} \left(\frac{\sqrt{2}\, {\mathrm e}^{-\frac{x^{2}}{2 \rho}}}{2 \sqrt{\pi}\, \sqrt{\rho}} \right) \cdot \,
\frac{\sqrt{2}\, {\mathrm e}^{-\frac{\mathit{T-x}^{2}}{2 \left(1-\rho \right)}}}{2 \sqrt{\pi}\, \sqrt{1-\rho}}
\left(\frac{\mathrm{erf}\! \left(\frac{\sqrt{2}\, \mathit{T-x}}{2 \sqrt{1-\rho}}\right)}{2}\right)^{k_2-1}\!\!\!\!\! \!\!\!\!\!dx ,
\end{eqnarray*}
which can be evaluated numerically.

\subsection{Probability for the Maximum of the GRF to Exceed  \texorpdfstring{$T$}{T} on  \texorpdfstring{$k_1$}{k1} Balls with Radius  \texorpdfstring{$\varepsilon$}{eps} each}
There are no known closed-form expressions for this probability, even in the one-dimensional case (i.e. an interval), and the probability is computed using a sampling approach, or approximations \cite{general_expression_grf_max_jean_mario,numerical_bounds_cecile}. 
In Section \ref{section:efficient_random_fields} we define a different field, for which there is a closed-form expression for the exceedance probability over a ball in any dimension. 
In order to derive an absolute bound, we treat the field's values on the $k_1$ balls as independent, which produces the bound  $1-\left(1-\Pr\left(\text{exceedance for a single ball}\right)\right)^{k_1}$.

\subsection{Examples for Optimal Parameters \texorpdfstring{$T,\lambda$}{T, l}}
Next follow some examples for choosing the optimal hyper-parameters $T,\lambda$, including a case in which the proposed paradigm fails, i.e. no $\lambda, T$ exist such that the exceedance probability for "non-clusterable" ($\Pi_{\text{NO}}$) is higher
than for "clusterable"  ($\Pi_{\text{YES}}$).


\paragraph{ \bf{Example 1}}
Here, we seek to differentiate between case $\Pi_{\text{YES}}$ being one cluster of size 0.01 (i.e. $k_1=1, \eps=0.005$, and case $\Pi_{\text{NO}}$ corresponding to a set which contains two vectors at least 0.02 apart ($k_2=2,\delta=0.02$), in dimension $d=1$. 
In Figure ~\ref{fig:optimal-T-and-lambda-for-1D}, the values of Equation \eqref{eq:opt1} are plotted for a range of $\lambda, T$. 
The optimal parameters are obtained at the highest point on the surface. 

Looking at $\lambda$ cross-sections of the 2D function in Figure \ref{fig:optimal-T-and-lambda-for-1D} allows an intuitive understanding of the dependence on $\lambda$.
If $\lambda$ is very small, then the covariance between $f(x), f(y)$, $\exp(-\lambda(x-y)^2)$, is very close to 1, hence the field is nearly constant, and its maximum is nearly identical for cases $\Pi_{\text{YES}}$ and $\Pi_{\text{NO}}$.
At the other extreme, where $\lambda$ is very large, the field's behavior approaches that of white noise, and the maximum is nearly independent of the extent of $S$ (while it does depend on the number of vectors in $S$).
For ``good'' $\lambda$ values, e.g. 3000, the correlation for case $\Pi_{\text{YES}}$ is bounded from below by $\exp(-3000 \cdot 0.01^2)=0.741$, while for case $\Pi_{\text{NO}}$ it reaches $\exp(-3000 \cdot 0.02^2)=0.301$.
This difference results in the exceedance probability of the field in case $\Pi_{\text{NO}}$ being higher than in case $\Pi_{\text{YES}}$, even when the number of vectors in case $\Pi_{\text{YES}}$ is unbounded -- as we compute the exceedance probability for a field on an interval of length 0.01. 

\paragraph {\bf {Example 2}}
There are however cases in which the proposed paradigm can fail, as there are no values of $\lambda, T$ for which Eq. \eqref{eq:opt1} obtains a positive value.
This is due to the fact that for case $\Pi_{\text{YES}}$, the exceedance probability must be bounded on a continuous set, as opposed to case $\Pi_{\text{NO}}$, where the field can only take values at a (typically small) finite set.
Such an example is when $\Pi_{\text{YES}}$ is defined by $k_1=3, \eps=0.05$, and $\Pi_{\text{NO}}$ by $k_2=4, \delta=0.1$.  
Since the difference of the exceedence probabilities (Equation \eqref{eq:opt1}) is always negative (Figure \ref{fig:example-fail}), there are no parameters $\lambda,T$ which can be applied.

\begin{figure}
    \centering
    \begin{minipage}{0.5\textwidth}
        \centering
        \includegraphics[scale=0.55, trim={0 0 0 2cm},clip]{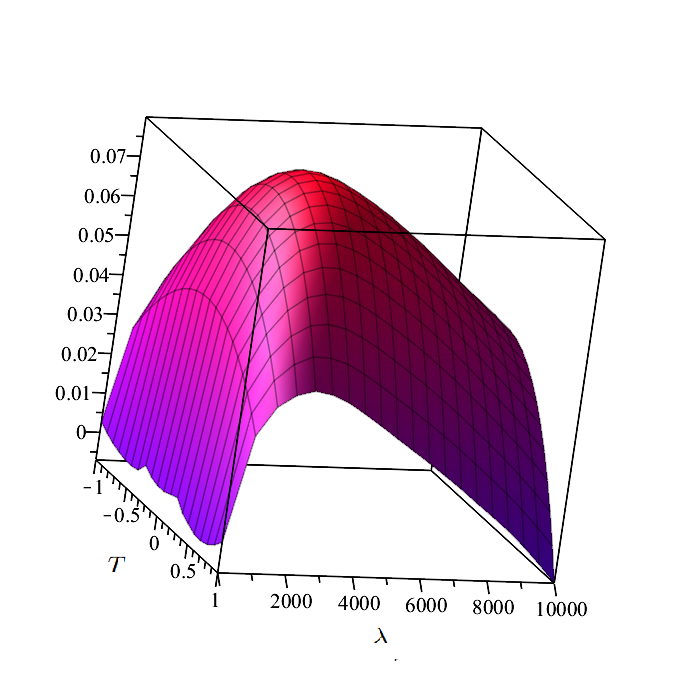} 
        \caption{An example of optimal $\lambda, T$ parameters}
        \label{fig:optimal-T-and-lambda-for-1D}
    \end{minipage}\hfill
    \begin{minipage}{0.5\textwidth}
        \centering
        \vspace{0.3cm}
        \includegraphics[scale=0.55, trim={0 0 0 1.5cm},clip]{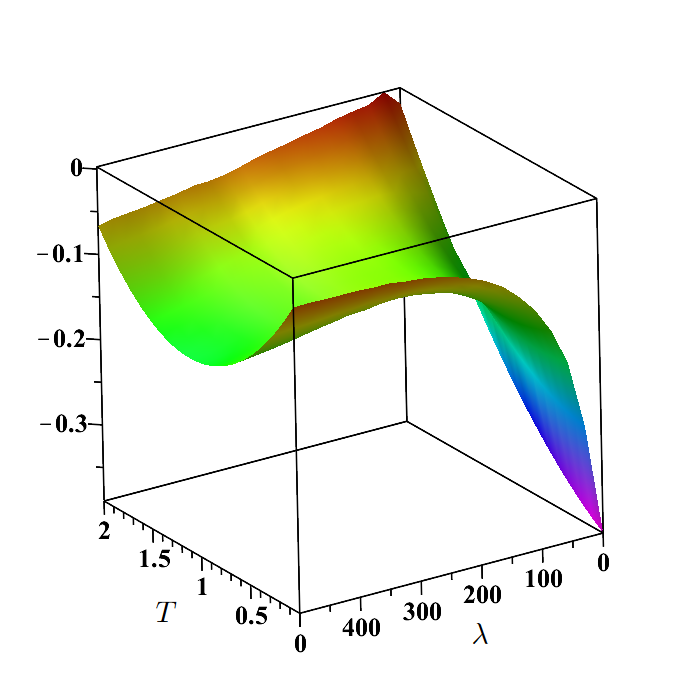} 
        \caption{An example in which the proposed paradigm fails}
        \label{fig:example-fail}
    \end{minipage}
\end{figure}

\subsection{Constructing the Gaussian Random Field} \label{sub:constructing}
There are no closed-form expressions for constructing a GRF, even in dimension 1. 
Here we propose an efficient approximation, which can be extended to higher dimensions.
The field is restricted to a finite interval, which without loss of generality we assume to be the unit interval $[0,1]$.

 Define 
\begin{equation}\label{eq:grf_aprox}
h(x) = \displaystyle \sum_{k=0}^n N_k(0,1)w_k \cos(\pi k (x-t)), \;\; t \sim U[-1,1]
\end{equation}
Since $h$ is the sum of normal variables, it is normal.
Also, as can be verified by direct integration, for any $x,y \in [0,1]$ the following holds:
$$\Cov\left(h(x),h(y)\right) = \frac{1}{2} \displaystyle \sum_{k=0}^n w_k^2 \cos\left(\pi k (x-y)\right) $$
Recall that for a GRF with parameter $\lambda$, 
$\Cov(h(x),h(y)) = \exp\left(-\lambda (x-y)^2\right)$. 
Therefore, we seek a valid approximation on the interval $[0,1]$:
\[
\exp\left(-\lambda t^2\right) \approx \sum_{k=0}^n \beta_k\cos(\pi kt) \]
Such an approximation can easily be found -- it is a truncated Fourier expansion
of the Gaussian. 
However, there is a caveat -- the Fourier coefficients are not all positive,
hence cannot be expressed as $w_k^2$.
While an optimal approximation using only positive coefficients can be sought, for our purpose it is  enough to simply omit the negative coefficients from the expansion. Since over the continuous domain,
the Fourier transform of a Gaussian is also a Gaussian,  all its coefficients are
positive.

This idea can be extended to higher dimensions $d$; alas, the size of the corresponding approximation increases with $d$ and $\lambda$ increases . The reason for the dependence on $\lambda$ is that more Fourier coefficients are required to approximate a Gaussian with a small standard deviation than with a large one. The number of terms required to approximate a $d$-dimensional Gaussian behaves as $\mathcal{O}(\lambda^{d/2})$.

\begin{figure}
    \centering
    \begin{minipage}{0.45\textwidth}
        \centering
        \includegraphics[scale=0.33]{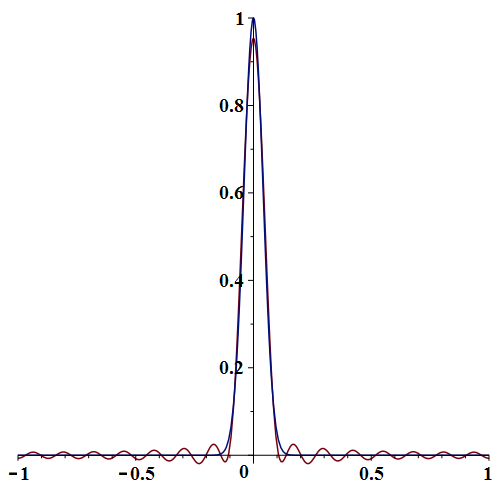} 
        \caption{The covariance of a 1D GRF with $\lambda=300$ (blue) and the covariance function of an approximation with 15 Fourier elements (red).}
        \label{fig:approx1}
    \end{minipage}\hfill
    \begin{minipage}{0.45\textwidth}
        \centering
        \includegraphics[scale=0.33, trim={0 0.2cm 0 0}, clip]{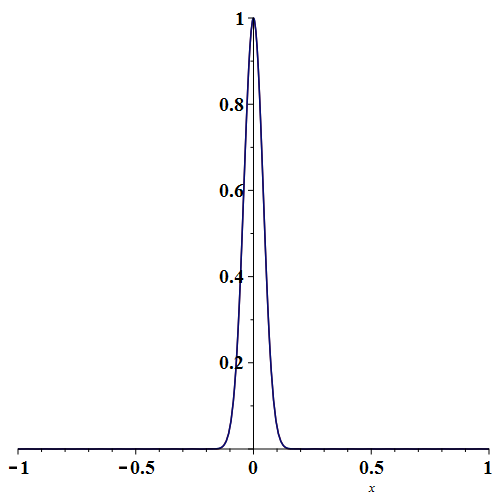} 
        \caption{The covariance of a 1D GRF with $\lambda=300$ (blue) and the covariance function of an approximation with 25 Fourier elements (red).}
         \label{fig:approx2}
    \end{minipage}
\end{figure}

\subsection{Algorithm and Analysis}
We now present our algorithm for distinguishing between the two cases, $\Pi_{\text{YES}}$ and  $\Pi_{\text{NO}}$. 
We assume that a random field was chosen, based on the parameters $\varepsilon, \delta$ (see Sections 
~\ref{section:efficient_random_fields}, ~\ref{section:lower_bound_optimal_kernels}, and ~\ref{section:higher_dimenstion}).
\\
\begin{mdframed}[nobreak=true,align=center]
\label{main_alg:step1}
{\bf Input.} Parameters: $\varepsilon, \delta \in \R_{+}$ and $k_1, k_2 \in \N$.
\begin{enumerate} 
    \item Given the parameters $\varepsilon,\delta, k_1$ and $k_2$ , find the $T$ and  $\lambda$ \footnote{This step is done numerically. 
This is a main limitation of our method, and we leave a more rigorous analysis of this step for future research.} that maximize Equation \eqref{eq:opt1}:
    \begin{equation} \nonumber
        \Pr\left(\max(g) \geq T \mid  \text{case} \; \Pi_{\text{NO}} \right) -
        \Pr\left(\max(g) \geq T \mid  \text{case} \; \Pi_{\text{YES}} \right)\text{.}
    \end{equation}
    Denote:
    \begin{itemize}
        \item $\calC=\Pr\left(\max(g) \geq T \mid  \text{case} \; \Pi_{\text{YES}} \right)$
        \item $\M = \Pr\left(\max(g) \geq T \mid  \text{case} \; \Pi_{\text{NO}} \right)$
    \end{itemize}
    
    \item If no $\lambda, T$ exist for which the equation above is strictly positive, return {\bf FAIL}
    \item Let $\Pp \leftarrow \Pr \left(\max(g)>T\right)$ over $S$\footnote{The probability is computed empirically by generating a number of fields.}, for a set of vectors $S \subseteq \R^d$.
    \item If $\Pp < \frac{\M+\calC}{2}$ return $\Pi_{\text{YES}}$; otherwise, output $S \in \Pi_{\text{NO}}$.

\end{enumerate}
\end{mdframed}

Using standard probability techniques, the success rate of the algorithm can be estimated.

\subsection{Distributed Systems and Privacy}
As for the FM algorithm, the approach described here for deciding between $\Pi_{\text{YES}}$ and $\Pi_{\text{NO}}$ is very suitable in a distributed, dynamic system: since the outcome depends only on the maximal value of the field, it requires only a constant volume of communication per node, and is also easy to adapt to the case of distributed data streams. The nodes merely need to agree on a seed for generating the random field (as the same fields must, of course, be applied on the entire data).

Interestingly, the above observation still allows a measure of privacy.
Assume for example a system with two nodes $N_1,N_2$.
If the maximal values on the (identical) fields used by the nodes are well-correlated, $N_2$ can infer on $N_1$'s data (for example, if $\Pi_{\text{YES}}$ holds, $N_2$ knows that $N_1$'s data can be covered by balls which also cover its own data).
However, if the maxima are uncorrelated, such inference is much more difficult and partial, as the average values of the GRF are invariant to data rotation and translation; therefore $N_2$ can only infer on $N_1$'s data up to rotation and translation. We leave further analysis of this subject for future research.
\subsection{Dimensionality Reduction}
The proposed algorithm is susceptible to curse of dimensionality, as the maximal value of the GRF on a ball with a fixed radius increases with the dimension, hence the exceedance probability for $\Pi_{\text{YES}}$ increases with $d$, while for $\Pi_{\text{NO}}$ it does not. 
This is also true for the other random fields we define (Sections \ref{section:efficient_random_fields}, \ref{section:higher_dimenstion}).
As opposed to other problems, ours cannot be directly remedied by applying random projections to low dimensions (in the spirit of the famous {\em Johnson-Lindenstrauss Lemma}), since, while preserving the {\em ratio} between distance, the {\em absolute} distances decrease;
however, the projection of a ball is a ball {\em with the same radius}.
So while the dimension decreases, so does $\delta$, making condition $\Pi_{\text{NO}}$ more difficult to satisfy. 
The trade-off between reducing $d$ and reducing $\delta$ is an interesting topic for future research.

\section{Efficient Construction of Random Fields} \label{section:efficient_random_fields}

While the GRF kernel is very suitable to our problem, it is computationally expensive
to construct, especially in high-dimensions, and its applications are typically restricted to dimension 3 and below \cite{brain_fields,CreaseyLang+2018+1+11}.
Further, it is untrivial  to compute its exceedance probability on a ball; even for the 1D case (interval), there are no closed-form expressions. 

We start the analysis of other fields with the well-known method of Locally Sensitive Hashing (LSH) \cite{DBLP:conf/compgeom/DatarIIM04}. 
The hash function, which accepts real valued vectors as inputs, is defined (up to a scaling factor) by:
\begin{equation}
    h(x)= \left \lfloor {a\cdot x +b}\right \rfloor, \; a_i \sim N(0,\sigma), b \sim U[0,1]
\end{equation}
where the normal distribution $N(0,\sigma)$ can be replaced by another \emph{Stable Distribution}. 

In order to compare this hashing scheme to ours, we estimate the mean squared distance between the hashed images of two vectors $x,y \in \R ^d$. 
We ignore the rounding operation (which is a necessity imposed by the requirement to compare hash values). 
Direct integration yields:
%
\begin{eqnarray*} \nonumber
& & \E[(h(x)-h(y))^2]= {(2\pi\sigma^2)^{-\frac{d}{2}}} \cdot\\ \nonumber
& & \int_{-\infty}^{\infty}\ldots\int_{-\infty}^{\infty} \exp\left(-\frac{\|a\|^2}{2\sigma^2}\right)\left(a\cdot x - a\cdot y\right)^2 da_1 \ldots da_d \\ \nonumber & & = \sigma^2\|x-y\|^2 \nonumber
\end{eqnarray*}
It's also easy to verify that $\E[f(x)]=0$, and $E[f(x)f(y)]=x\cdot y$. 
Hence, the covariance of $h(x),h(y)$ does not tend to 0 when $\|x-y\|$ increases, as the GRF does; further, the maximal value of the hash function will not be related to the number of clusters, but to the diameter of the set. Hence, while LSH is highly successful in solving other problems (e.g. nearest neighbor), it is not suitable to our goal.
\subsection{Random Trigonometric Fields}
In Section \ref{sub:constructing}, we looked at a method to approximate a GRF using sums of trigonometric functions, with random coefficients.
Not surprisingly, the larger $\lambda$ is, the more terms are required to approximate a field with parameter $\lambda$, which satisfies $\Cov\left(f(x),f(y)\right)=\exp(-\lambda\|x-y\|^2)$, as the fields become "rougher" when $\lambda$ increases.
It is natural to ask whether similar fields can be obtained while using a smaller number of trigonometric terms; it turns out that this can be achieved with one term, if the choice not of the coefficients, but the argument of the trigonometric functions (i.e. the frequency), is random \footnote{In the \emph{cosine field}, \cite{adler2009random}, the frequency is constant.}.

\begin{definition}[Random Sine Field]
  Let $x \in \R^d$. 
  Define a random field -- which will be referred to hereafter as the \emph{Random Sine Field}(RSF), with parameter $\sigma$, by:
\begin{equation}
    \label{sine_field}
    f(x) = \sin\left(a \cdot x + b\right)
\end{equation}

where $a \in \R^d$ is a random vector with components which are i.i.d normal
variables with zero expectation and $\sigma$ standard deviation, and $b$ is uniformly
distributed between $-\pi$ and $\pi$.
\end{definition}
  
\begin{lemma} \label{lemma-covariance-sine-field}
\leavevmode
\begin{enumerate}
  \item  for every $x\in \R^d$, $f(x)$ is a random variable  with zero expectation and a variance equal to $\frac{1}{2}$.
  \item The covariance of $f(x)$ and $f(y)$ is equal to:
$$\frac{1}{2}\exp\left(-\frac{\sigma^2\|x-y\|^2}{2}\right).$$
\end{enumerate}
\end{lemma}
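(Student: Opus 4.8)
The plan is to compute everything by conditioning first on the random frequency vector $a$, integrating out the uniform phase $b$, and then taking the expectation over $a$ via the Gaussian characteristic function.

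For part (1), I would start from $\E[f(x)\mid a] = \frac{1}{2\pi}\int_{-\pi}^{\pi}\sin(a\cdot x + b)\,db = 0$, since we are integrating a sine over a full period; taking the expectation over $a$ gives $\E[f(x)]=0$. For the variance, apply the identity $\sin^2\theta = \tfrac12(1-\cos 2\theta)$ with $\theta = a\cdot x + b$, so that $\E[f(x)^2\mid a] = \tfrac12 - \tfrac12\cdot\frac{1}{2\pi}\int_{-\pi}^{\pi}\cos(2a\cdot x + 2b)\,db$. The integrand has period $\pi$ in $b$, so the interval $[-\pi,\pi]$ covers two full periods and the integral vanishes; hence $\E[f(x)^2] = \tfrac12$ and $\mathrm{Var}(f(x)) = \tfrac12$.

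For part (2), since both means are zero, $\Cov(f(x),f(y)) = \E[\sin(a\cdot x + b)\sin(a\cdot y + b)]$. I would use the product-to-sum identity $\sin A\sin B = \tfrac12[\cos(A-B) - \cos(A+B)]$ with $A = a\cdot x + b$ and $B = a\cdot y + b$, so that $A-B = a\cdot(x-y)$ is independent of $b$ while $A+B = a\cdot(x+y) + 2b$. Integrating over $b$ kills the $\cos(A+B)$ term exactly as before, leaving $\Cov(f(x),f(y)) = \tfrac12\,\E[\cos(a\cdot(x-y))]$. Finally, $\E[\cos(a\cdot t)]$ is the real part of $\E[e^{i a\cdot t}]$, which factors over coordinates as $\prod_{j=1}^d\E[e^{i a_j t_j}] = \prod_{j=1}^d e^{-\sigma^2 t_j^2/2} = e^{-\sigma^2\|t\|^2/2}$ — the characteristic function of a centered Gaussian with variance $\sigma^2$ in each coordinate. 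Setting $t = x-y$ gives the claimed value $\tfrac12 e^{-\sigma^2\|x-y\|^2/2}$.

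There is no genuine obstacle here; the computation is routine. The only point demanding a bit of care is the phase integration: one must verify that $b$ ranges over an integer number of periods of the relevant cosine so that the ``sum-frequency'' terms integrate to zero — which is precisely why $b$ is taken uniform on an interval of length $2\pi$. Everything else reduces to the standard fact that the characteristic function of a Gaussian is again a Gaussian.
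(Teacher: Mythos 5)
Your proof is correct and follows essentially the same route as the paper, which simply writes the joint integral over $a$ and $b$ and asserts the result ``by direct integration, first integrating over $b$.'' You supply exactly the details the paper omits — the product-to-sum identity to kill the sum-frequency term over the full period of $b$, and the Gaussian characteristic function $\E[e^{ia\cdot t}]=e^{-\sigma^2\|t\|^2/2}$ to evaluate the remaining expectation over $a$ — so there is nothing to correct.
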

  
\begin{proof}
We start with the expectation: 

\begin{eqnarray*} \nonumber
& &  E[f(x)]=\frac{1}{2\pi}  \int_{-\infty}^{\infty}\ldots\int_{-\infty}^{\infty} \frac{1}{(2\pi\sigma^2)^\frac{d}{2}} \nonumber \\
& & \int_{-\pi}^{\pi}\exp\left(-\frac{\|a\|^2}{2\sigma^2}\right)\sin(a \cdot x +b) db da_1 \ldots da_d = 0 \nonumber
\end{eqnarray*}
as follows immediately by first integrating over $b$.
  
For the variance, the proof also follows by direct
integration:
\begin{eqnarray*} \nonumber
 & & E[f(x)f(y)]=\frac{1}{2\pi}  \int_{-\infty}^{\infty}\ldots\int_{-\infty}^{\infty} \frac{1}{(2\pi\sigma^2)^\frac{d}{2}} \nonumber \\ 
 & & \int_{-\pi}^{\pi}\exp\left(-\frac{\|a\|^2}{2\sigma^2}\right)\sin(a \cdot x +b)\sin(a \cdot y +b) db da_1 \ldots da_d \nonumber \\
 & & = \frac{1}{2}\exp\left(\frac{-\sigma^2\|x-y\|^2}{2}\right) \nonumber
  \end{eqnarray*}
  
\end{proof}

The drawback of this field is that, as opposed to the basic GRF, it does not
satisfy the property that, for each $x_1 \dots x_n$, the vector $\left(f(x_1) \ldots f(x_n)\right)$ obeys a Gaussian distribution. 
Therefore, we cannot a-priori use the Slepian Lemma \cite{slepian}, which guarantees that, as the vectors in case $\Pi_{\text{NO}}$ move farther apart, the exceedance probability of the maximum of the field increases. 
However, we were able to prove an analogous result for the sine field:

\begin{lemma} \label{Slepain-for-sine}
     The probability of the RSF to exceed any given value on a set of vectors, is monotonically increasing in all the distances between the vectors \footnote{We assume that the vectors are normalized.}.
\end{lemma}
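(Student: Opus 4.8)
The plan is to run the Gaussian interpolation argument underlying the Slepian Lemma; the point is that, although the vector $(f(x_1),\dots ,f(x_n))$ is \emph{not} Gaussian, it is a fixed coordinatewise function of one, and interpolation only needs that. First I would record the structure: by the computation in Lemma~\ref{lemma-covariance-sine-field}, $f(x_i)=\sin(G_i+b)$ where $G=(G_1,\dots ,G_n)$, $G_i=a\cdot x_i$, is a centered Gaussian vector with $\Cov(G_i,G_j)=\sigma^2\,x_i\cdot x_j$ and $b\sim U[-\pi,\pi]$ is independent of $G$. Because the $x_i$ are normalized, the covariance matrix $\Sigma$ of $G$ has \emph{constant diagonal} $\sigma^2$, and $\Sigma_{ij}=\sigma^2\bigl(1-\tfrac12\|x_i-x_j\|^2\bigr)$ is a strictly decreasing function of $\|x_i-x_j\|$. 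Hence ``all pairwise distances weakly larger'' is exactly ``all off-diagonal entries of $\Sigma$ weakly smaller, diagonal unchanged'', and it suffices to prove: if $\Sigma,\Sigma'$ are valid covariance matrices with common diagonal and $\Sigma'_{ij}\le\Sigma_{ij}$ for all $i\ne j$, then $\Pr(\max_i\sin(G_i+b)\ge T)\le\Pr(\max_i\sin(G'_i+b)\ge T)$. Since the segment $\Sigma_t=(1-t)\Sigma+t\Sigma'$ consists of PSD matrices with the same diagonal, this reduces to showing $t\mapsto\Pr(\max_i\sin(G^{(t)}_i+b)\ge T)$ is nondecreasing.

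Next I would set up the integration by parts. Fix $T\in(-1,1)$ and let $A=\{\phi:\sin\phi\ge T\}$, an arc of the circle. Then $\Pr(\max_i\sin(G_i+b)\ge T)=1-\E_G[\Phi(G)]$ with $\Phi(g)=\E_b\bigl[\prod_i\mathbf 1_{A^c}(g_i+b)\bigr]$; after a routine mollification ($\mathbf 1_{A^c}$ replaced by a $2\pi$-periodic smooth $\chi_\eta$ with $0\le\chi_\eta\le1$ and $\chi_\eta\to\mathbf 1_{A^c}$ a.e., with $\eta\to0$ at the end) $\Phi$ is smooth and bounded with bounded derivatives. By the Gaussian (Price) integration-by-parts formula, $\partial_{\Sigma_{k\ell}}\E_{G\sim N(0,\Sigma)}[\Phi(G)]=\E[\partial_k\partial_\ell\Phi(G)]$ for $k\ne\ell$, so along $\Sigma_t$ the derivative of $\E[\Phi(G^{(t)})]$ equals $\sum_{k<\ell}(\Sigma'_{k\ell}-\Sigma_{k\ell})\,\E_{G^{(t)}}[\partial_k\partial_\ell\Phi(G^{(t)})]$. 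As $\Sigma'_{k\ell}-\Sigma_{k\ell}\le0$, the whole lemma reduces to the pointwise inequality
\[
\E_{G\sim N(0,\Sigma)}\bigl[\partial_k\partial_\ell\Phi(G)\bigr]\ \ge\ 0\qquad (k\ne\ell)
\]
for every PSD $\Sigma$ with constant diagonal.

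To attack that inequality I would pass to Fourier series on the circle. Expanding $\mathbf 1_{A^c}$ (or $\chi_\eta$), multiplying out $\prod_i\mathbf 1_{A^c}(g_i+b)$, integrating in $b$ (which forces the frequency multi-index $m=(m_1,\dots ,m_n)$ to satisfy $\sum_i m_i=0$), and integrating in $G$ (Gaussian characteristic function) gives
\[
\E_G[\Phi(G)]=\sum_{\substack{m\in\Z^n\\ \sum_i m_i=0}}\Bigl(\prod_i\widehat{\mathbf 1_{A^c}}(m_i)\Bigr)\exp\!\bigl(-\tfrac12\,m^\top\Sigma m\bigr),
\]
hence $\E_G[\partial_k\partial_\ell\Phi(G)]=-\sum_{m:\sum m_i=0}m_km_\ell\bigl(\prod_i\widehat{\mathbf 1_{A^c}}(m_i)\bigr)e^{-\frac12 m^\top\Sigma m}$. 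Writing $A$ as the arc of angular radius $\alpha=\arccos T$ about $\pi/2$, one has $\widehat{\mathbf 1_{A^c}}(0)=1-\alpha/\pi$ and $\widehat{\mathbf 1_{A^c}}(m)=-(-i)^m\sin(m\alpha)/(\pi m)$ for $m\ne0$; since $\sum_i m_i=0$ the phases $(-i)^{m_i}$ multiply to $1$, so each coefficient $\prod_i\widehat{\mathbf 1_{A^c}}(m_i)$ is real. For $n=2$ this already closes the argument: the formula collapses to $\E_G[\Phi]=\sum_m\bigl|\widehat{\mathbf 1_{A^c}}(m)\bigr|^2 e^{-\frac12 m^2\sigma^2\|x_1-x_2\|^2}$, a manifestly nonnegative, termwise-decreasing function of $\|x_1-x_2\|$, so $\Pr(\max\ge T)$ increases with the distance.

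The main obstacle is the case $n\ge3$: the Fourier coefficients $\sin(m\alpha)/(\pi m)$ of an arc change sign, so the displayed sum is not nonnegative termwise (and inclusion–exclusion over pairs fails, since $\Pr(\text{three points all in }A)$ is \emph{not} monotone in the direction the alternating signs would need). I would therefore prove the required positivity in one of two ways: (i) a dedicated ``Slepian-type'' lemma stating that, for a centered Gaussian $G$ pushed to the circle via $U_i=G_i+b$ with $b$ uniform, the probability that some $U_i$ falls in a fixed arc cannot decrease when a single off-diagonal covariance is decreased — via an explicit one-parameter coupling of $N(0,\Sigma_t)$, or by induction on $n$ after conditioning on $(U_i)_{i\ne k}$ and reducing to a two-point statement; or (ii) resumming the Fourier series, grouping terms by the support $S\supseteq\{k,\ell\}$ of $m$ and using the explicit arc coefficients together with $\sum_i m_i=0$ to exhibit the sign. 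Step (i)/(ii) is where essentially all the work lies; the reduction in the first three paragraphs, the mollification, and the limit passage are routine. (This is also the reason the RSF requires a separate treatment from the GRF, for which the Slepian Lemma applies directly.)
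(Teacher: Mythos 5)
Your reduction is sound as far as it goes, and for two points it is actually complete and rigorous: the interpolation along $\Sigma_t$, the identity $\Pr(\max_i\sin(G_i+b)\ge T)=1-\E[\Phi(G)]$, and the Fourier computation showing $\E[\Phi]=\sum_m|\widehat{\mathbf 1_{A^c}}(m)|^2e^{-\frac12m^2\sigma^2\|x_1-x_2\|^2}$ together give a clean, termwise-monotone proof of the $n=2$ case. But the lemma is stated for an arbitrary set of vectors, and you explicitly stop short of the step that carries all the difficulty: the positivity $\E_{G}[\partial_k\partial_\ell\Phi(G)]\ge 0$ for $n\ge 3$. You correctly diagnose why it is hard --- $\partial\mathbf 1_{A^c}$ is a signed pair of point masses at the two endpoints of the arc, so $\partial_k\partial_\ell\Phi$ is a signed combination of four delta products weighted by $\prod_{i\ne k,\ell}\mathbf 1_{A^c}(g_i+b)$, and neither termwise positivity of the Fourier sum nor inclusion--exclusion survives --- but items (i) and (ii) are proposals for how one \emph{might} close this, not arguments. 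As written, the proposal does not prove the lemma; it proves it for $n=2$ and identifies an open step for $n\ge 3$.

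For comparison, the paper's own proof runs the Slepian-type interpolation on a different functional: it works with $f(x,y)=\int_{-\pi}^{\pi}\max(\sin(x+b),\sin(y+b))\,db$, treats $\partial^2\max(u,v)/\partial u\partial v$ as $-\tfrac12\delta(u-v)$, argues that the extremal configuration is the degenerate Gaussian concentrated on $y=-x$, evaluates the resulting one-dimensional integral to be $0$, and then asserts that the $n$-point case "follows immediately" from the pairwise computation. Your route (exceedance probability directly, mollification, Price's formula, Fourier series on the circle) is more faithful to the statement of the lemma --- the paper's proof actually argues about the expectation of the maximum rather than the exceedance probability --- and avoids the informal delta-function manipulations. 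But both arguments face the same obstruction at $n\ge 3$ (the conditioning on the remaining $n-2$ coordinates inside the mixed partial), and your write-up, to its credit, makes that obstruction explicit rather than waving it away. To complete the proof you would have to actually execute one of your strategies (i) or (ii); until then the general case is a genuine gap.
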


The proof of the Lemma is in section \ref{section:slepian-for-sine}. 

 \smallskip
 In addition to the very low computational complexity of generating the RSF -- even in high dimensions --  it has another  advantage:  for a ball $B_{\eps}$ of radius $\eps$ and a threshold $T$,
 there exists an explicit, simple expression for the probability of the maximum of the field over
$B_{\eps}$ to exceed $T$.

\begin{lemma}
    \label{lemma:proof-exceedance}
      The probability of the RSF to exceed $T$ on $B_{\eps}$
      is given by 
      \begin{equation}
        \frac{\pi-2\arcsin(T)+2\eps\sigma\sqrt{d}}{2\pi}
      \end{equation}
      where it is tacitly assumed that the numerator is $\leq 2\pi$.
\end{lemma}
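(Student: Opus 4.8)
The plan is to analyze the event $\{\max_{x \in B_\eps} f(x) \geq T\}$ directly, exploiting the very simple structure of the Random Sine Field $f(x) = \sin(a \cdot x + b)$. The key observation is that the maximum of $f$ over $B_\eps = B(v,\eps)$ depends on $a$ only through its norm $\|a\|$, and on $b$ only through $b + a\cdot v$. Indeed, writing $x = v + u$ with $\|u\| \leq \eps$, we have $a \cdot x + b = (a \cdot v + b) + a \cdot u$, and as $u$ ranges over the ball of radius $\eps$, the quantity $a \cdot u$ ranges over the interval $[-\eps\|a\|, \eps\|a\|]$. Hence the argument $a \cdot x + b$ ranges over an interval of length $2\eps\|a\|$ centered at $\phi := a\cdot v + b$. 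So $\max_{x \in B_\eps} f(x) = \max_{\theta \in [\phi - \eps\|a\|,\, \phi + \eps\|a\|]} \sin\theta$.

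\textbf{First} I would reduce to two scalar random variables: $R := \|a\|$ and the phase $\phi = a \cdot v + b$. Since $b \sim U[-\pi,\pi]$ is independent of $a$, and $\sin$ (hence the whole event) is $2\pi$-periodic in $\phi$, the variable $\phi \bmod 2\pi$ is uniform on an interval of length $2\pi$ and independent of $R$. So, conditioning on $R = r$, the problem becomes: for $\phi$ uniform over a length-$2\pi$ window, what is the probability that $\sin\theta \geq T$ for some $\theta$ in an arc of length $2\eps r$ centered at $\phi$? Equivalently, letting $\psi$ be uniform on $[0,2\pi)$, we ask $\Pr[\exists\, \theta \in [\psi,\psi + L] : \sin\theta \geq T]$ where $L = 2\eps r$ (after re-centering the arc, which only shifts $\psi$ by $L/2$, preserving uniformity).

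\textbf{Next}, the core computation: the set $\{\theta \bmod 2\pi : \sin\theta \geq T\}$ is a single arc of length $\pi - 2\arcsin T$ (for $T \in [-1,1]$). The event ``the length-$L$ interval $[\psi,\psi+L]$ meets this arc'' holds iff $\psi$ lies in a dilated arc of length $(\pi - 2\arcsin T) + L$ (the Minkowski sum of the target arc with an interval of length $L$), again provided this total does not exceed $2\pi$ — which is exactly the stated side condition $\pi - 2\arcsin T + 2\eps\sigma\sqrt{d} \leq 2\pi$ once we handle the $r$-dependence. So, conditionally on $R = r$,
\[
\Pr\!\left[\max_{B_\eps} f \geq T \,\middle|\, R = r\right] = \frac{\pi - 2\arcsin T + 2\eps r}{2\pi}.
\]
\textbf{Finally} I would integrate over $r$: since $a$ has i.i.d.\ $N(0,\sigma)$ coordinates, $\E[R] = \E\|a\| = \sigma\sqrt{d}$ — wait, that is only exact asymptotically; $\E\|a\| = \sigma\sqrt{2}\,\Gamma(\tfrac{d+1}{2})/\Gamma(\tfrac d2)$. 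Here I expect the paper is using $\E\|a\| \approx \sigma\sqrt d$, or perhaps defining $\sigma$ as a per-coordinate scale so that the relevant quantity is $\sqrt{\E\|a\|^2} = \sigma\sqrt d$; in any case, by linearity $\Pr[\max_{B_\eps}f \geq T] = \E_r\big[\tfrac{\pi - 2\arcsin T + 2\eps r}{2\pi}\big] = \tfrac{\pi - 2\arcsin T + 2\eps\,\E[R]}{2\pi}$, yielding the claimed formula with $\E[R]$ written as $\sigma\sqrt d$.

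\textbf{The main obstacle} is twofold. First, the side condition ``numerator $\leq 2\pi$'' must be interpreted carefully: the clean arc-length formula for the conditional probability requires $\pi - 2\arcsin T + 2\eps r \leq 2\pi$ \emph{for the relevant range of} $r$, not just in expectation; so one must either assume this holds almost surely (i.e.\ truncate, or assume parameters are in the regime where $\eps R$ is small) or accept the formula as an approximation/upper bound in general. This is presumably why the lemma says ``it is tacitly assumed.'' Second, pinning down the exact meaning of $\sigma\sqrt d$ versus $\E\|a\|$ — I would state explicitly that we use $\E\|a\| = \sigma\sqrt d$ (treating this as the definition of the effective bandwidth, or invoking concentration of $\|a\|$ around $\sigma\sqrt d$ for moderate $d$), since otherwise the Gamma-function correction would clutter the statement. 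Everything else — the periodicity reduction, the Minkowski-sum arc computation, and the final integration — is routine.
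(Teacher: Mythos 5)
Your proof follows essentially the same route as the paper's: reduce to a uniformly distributed phase on the circle, note that the set where $\sin\theta \geq T$ is an arc of length $\pi - 2\arcsin T$, observe that the image of $B_\eps$ under $x \mapsto a\cdot x$ is an interval of length $2\eps\|a\|$, and apply the sum-of-arc-lengths rule for the probability that two arcs intersect. Your version is in fact slightly more careful than the paper's, which simply asserts that the ``average length'' of $a$ is $\sigma\sqrt d$ (rather than $\sigma\sqrt{2}\,\Gamma(\tfrac{d+1}{2})/\Gamma(\tfrac d2)$) and does not remark that the side condition should hold over the relevant range of $\|a\|$ rather than merely in expectation --- both points you correctly flag.
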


 \smallskip      

\begin{proof}
Clearly the probability of a random angle $\theta$ to satisfy $\sin(\theta) \geq T$
is $\frac{\pi - 2\arcsin(T)}{2\pi}$. It remains to compute the probability that the
image of $B_{\eps}$ under a random hash function intersects the interval of
length $\frac{\pi - 2\arcsin(T)}{2\pi}$ on the unit circle in which $\sin(\theta) \geq T$.
The image of $B_{\eps}$ under $f(x) = a \cdot x $ is an interval of average
length $2\eps\sigma\sqrt{d}$, as its endpoints are the inner products of $a$ with
the vectors on $B_{\eps}$ in the directions of $a$ and $-a$.
The average length
of a $d$-vector whose coordinates are all drawn from $N(0,\sigma)$ is 
$\sigma\sqrt{d}$. The probability for both intervals to intersect is the sum of their lengths,
hence the result follows. The geometric idea behind the proof is provided in Fig. \ref{fig:proof-exceedance-1}.
\end{proof}
    
    \begin{figure}[b!]
    \centering
  \hspace*{-0mm}  \includegraphics[scale=0.36]{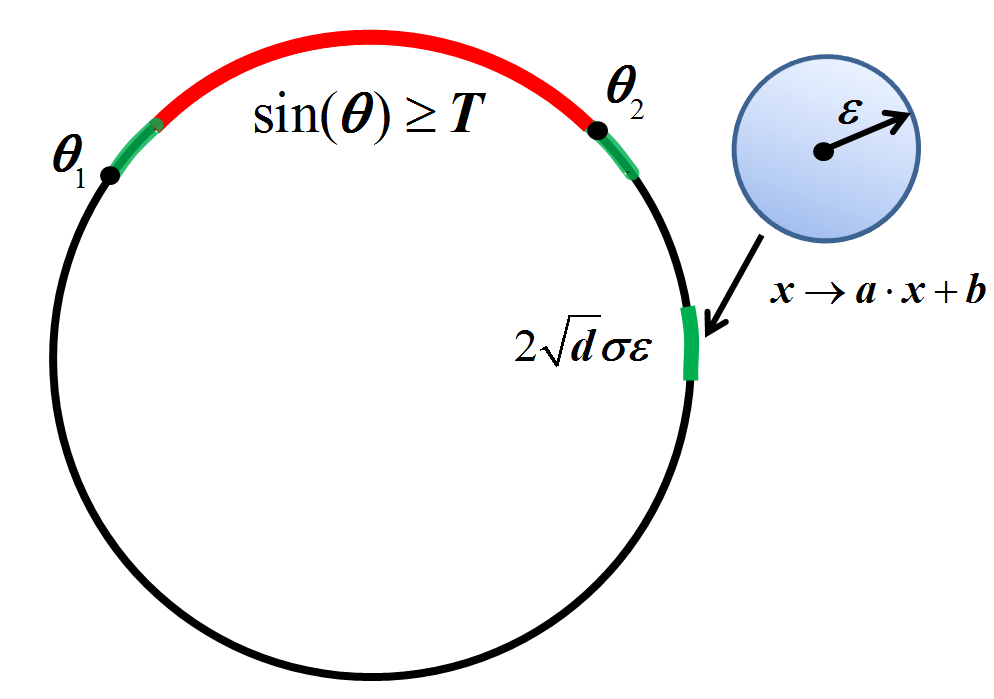}
    \caption{Sketch for the proof of Lemma \ref{lemma:proof-exceedance}. The red "dome" represents the range of angles $\theta$ for which $\sin(\theta) \geq T$. The $d$-dimensional ball with radius $\epsilon$ is projected onto a range of  $\theta$ of length  $2\eps\sigma\sqrt{d}$ (green). For the two intervals to intersect, the starting point of the "green" interval has to be between $\theta_1$ and $\theta_2$.} \label{fig:proof-exceedance-1}
\end{figure}

Choosing optimal values for the parameters $T,\sigma$ proceeds as for the GRF (Section ~\ref{section:hash_function_clustring}), however they are  easier to compute, since computing Eq. \eqref{eq:opt1} is simpler.

Fig. ~\ref{fig:RSF-parameters} provides a simple example, which captures the intuition behind the ability of the RSF to distinguish conditions $\yes$ and $\no$: for small $\sigma$, the field is nearly constant, and no differentiation between the aforementioned conditions is possible. When $\sigma$ becomes too large, the field oscillates rapidly, and always obtains the maximal value of 1 on a set of size $\eps$, causing $Pr(\max(g) \geq T \mid \yes)$ to be larger than $Pr(\max(g) \geq T \mid \no)$, as $\no$ contains finitely many vectors.
The optimal value is obtained between these extreme values. 
This echoes the choice of the optimal $\lambda$ for the GRF (Section ~\ref{section:hash_function_clustring}); for small $\lambda$ the field is nearly constant, but as $\lambda$ increases, the field changes more rapidly, until the expected maximum for $\yes$ will be larger than for $\no$. 
Intuitively, we seek values of $\lambda$ and $\sigma$ for which the range of field values for $\no$ is the largest w.r.t to $\yes$.
 
   \begin{figure}[t!]
    \centering
    \includegraphics[scale=0.6]{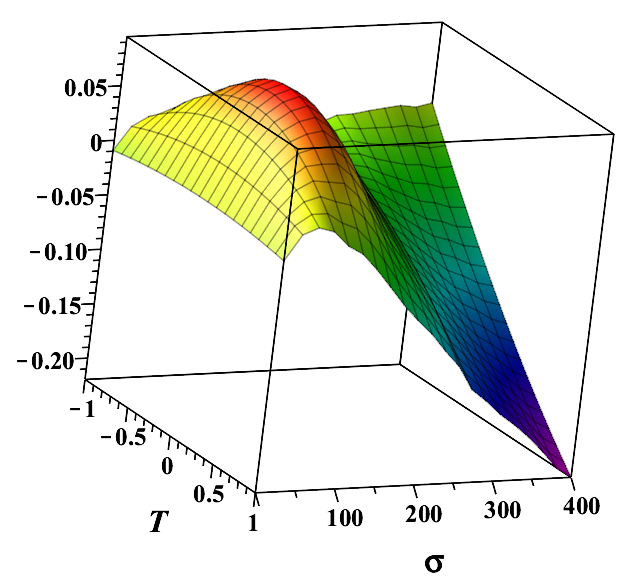}
    \caption{An example of choosing the optimal $T,\sigma$ for the same parameters as in Fig.
    \ref{fig:optimal-T-and-lambda-for-1D} ($d=1,k_1=1,k_2=2,\delta=0.02,\eps=0.01$). As in  Fig.
    \ref{fig:optimal-T-and-lambda-for-1D}, the difference between the exceedance probabilities of conditions $\no$ and $\yes$ is plotted, relative to a range of values for $T,\sigma$. Note that in this case, RSF performs on par with the GRF, as the maximal difference between $Pr(\no)$ and $Pr(\yes)$ is similar.} \label{fig:RSF-parameters}
\end{figure}

   \subsection{Random Fields and Stable Distributions}
   When defining the RSF, we randomly chose the vector $a$ such that each of its components obeys a normal distribution. The choice of the distribution over $a$ must be judicious, as demonstrated by the following example, which holds even in the case of a 1D field: define a field by $f(x)=sin(ax+b)$, where $b \sim U[-\pi,\pi]$ and $a \sim U[-A,A]$. It is easy to verify that  $\Cov(f(x),f(y))=\frac{\sin(A(x-y)}{2(x-y)}$ (i.e. a {\em sinc} function at $x-y$). This is inappropriate  for our goal, as the covariance is not positive nor monotonically decreasing in $|x-y|$.
   
   However, just as the normal distribution was used to construct the RSF in Definition \ref{sine_field},
   we can use another well-known example of a stable distribution for the coefficients of the vector $a$,
   which defined the field at $x$ by $\sin(a \cdot x + b)$:
   $\Pr(a_i)=\frac{k}{\pi(1+k^2a_i^2)}$ (where $b$ is uniformly distributed in $[-\pi, \pi]$, and $k>0$).
   Allowing a slight abuse of terminology, we shall refer to it as the {\em Stable Field}.
   A somewhat more elaborate computation (which will not be repeated here) than for the case of the RSF, yields that in this case, the covariance matrix of the field is given (up to a multiplicative constant) by:
   \[\Cov(f(x),f(y))=\exp\left(-\frac{\|x-y\|_1}{k}\right) \]
   while this field is also viable for our problem (it is stationary, and the covariance monotonically decreases with the $L_1$ distance),
   its performance will be inferior to that of the GRF and RSF;
   this is due to the fact that the decrease in the covariance's values from "distance $\eps$" to "distance $\delta$" will be slower.
   If e.g. $\delta=10\eps$, and we set the covariance to be 0.99 for distance $\eps$, then for distance $\delta$ it will be 0.368 for the RSF and 0.905 for the stable field.
   Still, an interesting topic for future work is the relation between stable distributions and random fields, and their application to the problem studied here.
   
   Figs. \ref{fig:1Dstable},\ref{fig:2Dstable} include one and two dimensional examples of the stable field.
   Note that these are {\em not} samples of the random field, which is just a sinusoidal function, but a visualization of the covariance. These fields tend to be more "jittery" than the GRF and RSF.

\begin{figure}
    \centering
    \begin{minipage}{0.45\textwidth}
        \centering
        \includegraphics[width=8cm,height=5cm]{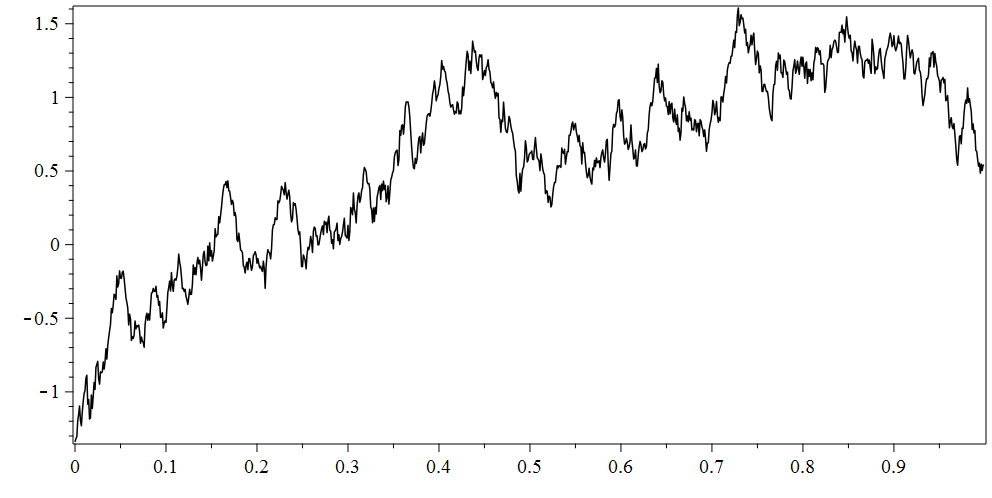}
        \caption{An example of a 1D stable field with  $k=10$ }
        \label{fig:1Dstable}
    \end{minipage}\hfill
    \begin{minipage}{0.45\textwidth}
        \centering
        \includegraphics[scale=0.7]{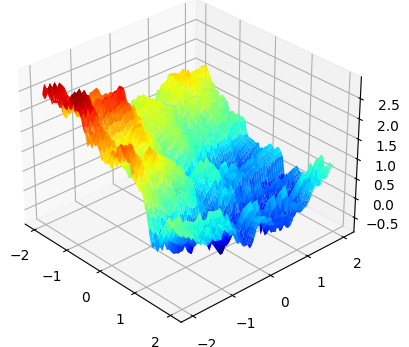}
        \caption{An example of a 2D stable field with   $k=10$ }
        \label{fig:2Dstable}
    \end{minipage}
\end{figure}

\section{Slepian-type result for RSF} \label{section:slepian-for-sine}
In this section we prove an analogous version of the Slepian Lemma for the sine random field. 
Recall that the ``sine field'' is defined by
\[
R(u) = \sin(a \cdot u + b), a_i \sim N(0,\sigma), b \sim U[-\pi,\pi]
\]
(assume hereafter $\sigma=1$, the proof remains the same for every $\sigma$).

As a direct computation proves \ref{lemma-covariance-sine-field}, the covariance of $R(x),R(y)$ equals $\exp(-||x-y||^2)$. 
The decision whether properties $\Pi_{\text{NO}}$ or $\Pi_{\text{YES}}$ hold for a set of
points $S$ depends only on the exceedance probability: given $T$,
the probability that the maximum value $R()$ attains on $S$ is at
least $T$.

We next prove Lemma \ref{Slepain-for-sine}, which states that as the
points of $S$ move farther apart, the expectation of the maximum of
the sine field increases.

Let us start with two points, $u_1, u_2 \in {\R}^d$. 
Assume $||u_i||=1$.
We then have, from standard results about Gaussian random variables:
\[
R(u_i) = \sin(x_i + b), x_i \sim N(0,1), \Cov(x_1,x_2) = u_1 \cdot u_2 \]
Since $u_1 \cdot u_2 = \frac{1}{2}(||u_1||^2+||u_2||^2-||u_1 - u_2||^2)$ and
$||u_i||=1$, $\Cov(x_1,x_2)$ decreases when $||u_1 - u_2||$ increases.

According to \cite{slepian1963zeros}, when applied to
$f(x,y)=\int\limits_{-\pi}\limits^{\pi}\max(\sin(x+b),\sin(y+b))db$, it suffices to prove that

\begin{equation} \label{eq:slepian_integral}
  \int\limits_{-\infty}^{\infty}\int\limits_{-\infty}^{\infty}
  p(x,y)\frac{\partial^2 f}{\partial x \partial y}dxdy \leq 0
\end{equation}

Next, we look at $\frac{\partial^2 f}{\partial x \partial y}$. 
This is a little tricky, as the mixed derivative of $\max(u,v)$ is 0 when
$u \neq v$, and does not exist when $u=v$. 
However, standard arguments allow to relate to it as $-\frac{1}{2}\delta(u,v)$. 
Inserting the derivative into the integral which defines $f(x,y)$, we have
\[
\frac{\partial^2 f}{\partial x \partial y} =
\int\limits_{-\pi}\limits^{\pi}\cos(x+b)\cos(y+b)
\frac{\partial^2 \max(u,v)}{\partial u \partial v}(u=\sin(x+b),v=\sin(y+b))db 
\]

Therefore, the only points which contribute to the integrand are the values $b$ for which $\sin(x+b)=\sin(y+b)$. When $x=y+2\pi k$, this holds for every $b \in [-\pi,\pi]$, hence the integrand is a delta function at every point. 
In addition, the integrand contains a $\cos(x+b)$ factor (and not $\cos(x+b)\cos(y+b)$, as one of the cosines is canceled by the inner derivative of the delta function). 
This means that along the lines $y=x+2\pi k$, $f(x,y)$ is negative, and its value tends to infinity w.r.t relative to the value at other points; therefore, the maximal value of the integral in Eq. \ref{eq:slepian_integral} is obtained when
the values of $p(x,y)$ obtain their smallest possible value along the ``ridge'' $y=x$, i.e. when the covariance of $p(x,y)$ approaches -1, and $p(x,y)$ approaches a degenerate Gaussian whose which equals 1 when $y=-x$, and 0 otherwise. 

For this Gaussian, the integral in Eq. \ref{eq:slepian_integral}
reduces to an integral over the line $y=-x$. This one-dimensional
integral can be readily computed by distinguishing between the case
$x=k \pi, y = -k \pi$ (in which the integrand is a delta function at
every point), and the other cases; its value turns out to be 0. Since
this is the maximal value, the condition in Eq. \ref{eq:slepian_integral} is fulfilled,
and the expectation of the maximum indeed increases when the covariance
of $x,y$ decreases. The general case (more than two random variables) follows
immediately, since we need only look at all the partial derivatives involving all pairs of variables.

\section{Lower Bounds and Optimal Kernels} \label{section:lower_bound_optimal_kernels}
We now look at a more general case, in which the covariance matrix of the field is defined by any \emph{positive definite kernel} $k()$, that is, a function from $\R^d$ to $\R$ with the property that, for every set of vectors $x_1 \ldots x_n$, the matrix $C_{i,j}=k\left(\|x_i-x_j\|\right)$ is positive definite. 
This allows to apply more general kernels, whose covariance matrix $C_{i,j}$ can be different from the one of the basic GRF, for which $k(t)=\exp(-\lambda \|t\|^2)$, and $C_{i,j}=\exp(-\lambda \|x_i-x_j\|^2)$. 
Many families of positive definite kernels are known, for example the {\em Matérn kernels} \cite{DBLP:books/lib/RasmussenW06}, and applications, for example in machine learning, can benefit from the choice of a suitable kernel \cite{DBLP:journals/jmlr/Genton01,pr8010024}. 
Here we deal with the case $d=1$, and in Sections ~\ref{section:higher_dimenstion}, ~\ref{section:numerical_eval} the general case is addressed.

As discussed in Section \ref{section:hash_function_clustring}, an optimal kernel $k()$ for our problem is one which, loosely speaking, treats vectors with distance smaller than $\varepsilon$ as ``nearly identical'', and vectors with distance larger than $\delta$ as ``nearly independent''. 
Ostensibly this can be achieved by applying a ``step function'' kernel
\[
k(t)=  \begin{cases} 1 & \text{ if } t \leq \alpha \\ 0 & \text{ if } t>\alpha  \end{cases} 
\]
for some $\varepsilon<\alpha<\delta$. 
Alas, this kernel is not positive definite, as can be seen even for a case with three vectors. Assume the vectors are $x_1,x_2,x_3$, and satisfy $\|x_1-x_2\|,\|x_2-x_3\|<\varepsilon$, and $\|x_2-x_3\|>\delta$. 
Then the matrix defined by $C_{i,j}=k(\|x_i-x_j\|)$ is equal to
\[ \left( \begin{array}{ccc}
1 & 1 & 0 \\
1 & 1 & 1 \\
0 & 1 & 1 \end{array} \right)\] 
which is not positive definite, as its eigenvalues are\\$1+\sqrt{2},1,1-\sqrt{2}<0$. Hence, it is impossible to realize it as the covariance matrix of any field.

We next formalize the requirements for an optimal kernel function $k()$, given the problem parameters $\varepsilon,\delta$. 
In order for the  problem to be well-defined, we normalize $k()$ by requiring $k(0)=1, k(\varepsilon)=c$. 
Since $\varepsilon$ is the allowed size of a ``tight'' cluster, and we aim for the field to map tight clusters to a very small range, we typically set $c$ to be very close to 1, e.g. 0.99.
We also require $k()$ to monotonically decrease, since the correlation between the field values at $x,y$ must decrease as $\|x-y\|$ increases. 
We also demand that $k()$ be non-negative, for the following reason: since $k(0)=1$, if there exists a $t_1$ such that $k(t_1)<0$, then by continuity there must be a value $t_2$ such that $0<t_2<t_1$, and $k(t_2)=0$. 
But then, from the Slepian Lemma,  
\[\Pr\{\max\{f(0),f(t_1)\} \geq T\} > \Pr\{\max\{f(0),f(t_2)\} \geq T\} \]
in contradiction to the requirement that the exceedance probability of the maximum increases as the vectors move farther apart.
\\~~
\\~~
All the requirements from $k()$ are now summarized:
\begin{enumerate}
\item Domain: $k(t)$ is defined in the range $0 \leq t \leq 1$.
\item Normalization: $k(0) = 1,k(\varepsilon) = c$.
\item $k()$ is positive definite.
\item $k()$ is positive.
  \item $k()$ is monotonically decreasing.
  \item Under conditions 1-5, $k(\delta)$ is minimal.
\end{enumerate}

Kernels that satisfy 3,4 above are often referred to as \emph{doubly positive},
and are of considerable interest in physics and applied mathematics \cite{dirac_fourier,efimov2016integral,positive_functions_positive_fourier_giraud,positivity_of_fourier_giraud2014}. 
From the celebrated Bochner Theorem \cite{positive_positivedefinite}, a kernel is doubly positive iff both it and its Fourier transform are positive. 
The rate of decay of such kernels was studied, unfortunately there do not exist concrete results for solving condition 6.

A large family of candidate kernels is provided by the following theorem,
due to Polya \cite{Polya}:
\begin{theorem}
Let $\phi:\R \rightarrow [0,\infty]$ be even, decreasing, and convex. 
Then it defines a positive definite kernel. 
\end{theorem}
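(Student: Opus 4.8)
The plan is to exhibit $\phi$ as a nonnegative mixture of ``tent'' kernels, each of which is visibly positive definite (this is the Fej\'er--Bochner picture made explicit). Read the hypotheses in the only way compatible with evenness: $\phi$ is even, and on $[0,\infty)$ it is both (weakly) decreasing and convex. Three harmless reductions come first. Since $\phi$ is nonnegative and decreasing on $[0,\infty)$, the limit $c:=\lim_{t\to\infty}\phi(t)\ge 0$ exists; the constant kernel $c$ is positive definite, so it suffices to treat $\phi-c$ and I may assume $\phi(\infty)=0$. I may assume $\phi$ is finite (hence, by convexity, continuous on $(0,\infty)$), since otherwise one restricts to the interval where it is finite. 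And a single-point change of $\phi(0)$ only adds a nonnegative multiple of the identity kernel $\mathbf{1}_{\{0\}}(x-y)$, so I may take $\phi$ continuous at $0$ as well.

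The building block is, for $a>0$, the triangle function $\Lambda_a(t):=\max\{0,\,1-|t|/a\}$. With $g_a:=\mathbf{1}_{[-a/2,\,a/2]}$ a one-line computation gives $g_a*g_a=a\,\Lambda_a$, and since $g_a$ is real and even,
\[
\sum_{j,k}c_j\overline{c_k}\,(g_a*g_a)(x_j-x_k)=\int_{\R}\Bigl|\sum_j c_j\,g_a(v-x_j)\Bigr|^2\,dv\ \ge\ 0
\]
for every choice of $x_1,\dots,x_n\in\R$ and $c_1,\dots,c_n\in\C$; thus each $\Lambda_a$ is a positive definite kernel (equivalently, up to normalization, the Fourier transform of the nonnegative Fej\'er density $|\widehat{g_a}|^2$). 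Next I would establish the representation $\phi(t)=\int_{(0,\infty)}\Lambda_u(t)\,d\mu(u)$ with $d\mu(u):=u\,d\nu(u)$, where $\nu$ is the distributional second derivative of $\phi|_{[0,\infty)}$ --- a nonnegative measure, by convexity. Indeed $-\phi'$ is nonnegative (as $\phi$ is decreasing), decreasing (as $\phi$ is convex), and tends to $0$ at $\infty$ (else $\phi\to-\infty$), so $-\phi'(s)=\nu((s,\infty))$, and by Tonelli's theorem, for $t\ge 0$,
\begin{align*}
\phi(t)=\int_t^\infty\bigl(-\phi'(s)\bigr)\,ds&=\int_t^\infty\!\!\int_{(0,\infty)}\mathbf{1}[u>s]\,d\nu(u)\,ds\\
&=\int_{(0,\infty)}(u-t)_+\,d\nu(u)=\int_{(0,\infty)}u\,\Lambda_u(t)\,d\nu(u),
\end{align*}
and evenness of both sides extends this to all $t$; putting $t=0$ gives $\mu((0,\infty))=\phi(0)<\infty$, so $\mu$ is a finite nonnegative measure.

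Combining the two ingredients finishes the argument: for all $x_1,\dots,x_n$ and $c_1,\dots,c_n$,
\[
\sum_{j,k}c_j\overline{c_k}\,\phi(x_j-x_k)=\int_{(0,\infty)}\Bigl(\sum_{j,k}c_j\overline{c_k}\,\Lambda_u(x_j-x_k)\Bigr)\,d\mu(u)\ \ge\ 0,
\]
the interchange of the finite sum with the integral being justified by finiteness of $\mu$ and $0\le\Lambda_u\le 1$, and the nonnegativity of the integrand being the first ingredient. Hence every Gram matrix $\bigl[\phi(x_i-x_j)\bigr]$ is positive semidefinite, i.e.\ $\phi$ defines a positive definite kernel, which is the claim. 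The two reductions and the elementary identities for $\Lambda_a,\,g_a$ are routine; the step that genuinely needs care is the integral representation when $\phi$ is merely convex rather than twice differentiable --- $\phi''$ must be handled as a Lebesgue--Stieltjes measure, the identity $-\phi'(s)=\nu((s,\infty))$ holds for the right-continuous version of $\phi'$ up to its at-most-countable jump set, and the ``double integration by parts'' is cleanest phrased as the Tonelli step above. I expect this, together with the boundary bookkeeping at $t=0$ and at $t=\infty$ (existence of $c$, which makes the representation terminate), to be the only real obstacle; positive definiteness of the tent kernel itself is immediate.
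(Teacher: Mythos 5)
Your argument is correct, and it is the classical proof of P\'olya's criterion: decompose $\phi$ (after subtracting its limit at infinity and peeling off a possible jump at the origin) as a nonnegative mixture $\phi(t)=\int_{(0,\infty)}u\,\Lambda_u(t)\,d\nu(u)$ of tent functions via the distributional second derivative, and observe that each tent $\Lambda_u=\frac{1}{u}\,g_u*g_u$ has a manifestly nonnegative quadratic form. The paper does not prove this statement at all --- it is quoted with a citation to P\'olya --- so there is no in-paper argument to compare against; your write-up correctly identifies and handles the only delicate points (the Lebesgue--Stieltjes treatment of $\phi''$, the vanishing of $\phi'$ at infinity, and the finiteness of the mixing measure, which equals $\phi(0)$). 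The one cosmetic caveat is the codomain $[0,\infty]$ in the statement: a value of $+\infty$ makes the Gram matrix ill-defined, so your reduction to finite $\phi$ is really a (reasonable) reinterpretation of what is surely a typo for $[0,\infty)$.
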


Alas, the convexity constraint severely hampers the goal of obtaining a
minimal value of $k(\delta)$. 
To illustrate this, observe that for  $\varepsilon=0.01,\delta=0.1,c=0.99$, the value
of a convex kernel at 0.1 is bounded from below by 0.9, while the Gaussian kernel (known to be doubly positive), $\exp(-\lambda t^2)$, with $\lambda$ chosen to satisfy $k(0.01)=0.99$, satisfies $k(0.1) = 0.366$.

The question therefore is: can the value $k(\delta)$ obtained by the basic GRF (Gaussian kernel) be decreased further? To the best of our knowledge, there are no known bounds on the minimal value. 
We now describe two methods for bounding  $k(\delta)$. 
Such bounds are very important, as they allow us to determine how close a certain kernel is to the optimal one.

\subsection{Bounding \texorpdfstring{$k(\delta)$}{k(delta)} using positive definiteness of kernel matrices}
Let us define, for every non-negative integer $l$, $p_l = k(l\varepsilon)$. 
If $k()$ is a positive definite kernel, then for every $x_1,x_2 \ldots x_n$, the $n \times n$ matrix $M_{i,j} = k(\|x_i - x_j\|)$ is positive definite. 
This allows to derive many inequalities which are useful in bounding $k_l$. 
The most basic one is provided in the following simple lemma.
\begin{lemma}
  \label{lemma:k2}
  $p_2 \geq 2p_1^2 - 1$.
\end{lemma}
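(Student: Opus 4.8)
The plan is to exploit the positive semidefiniteness of a well-chosen $3\times 3$ kernel matrix built from three collinear, equally spaced points. Take $x_1 = 0$, $x_2 = \varepsilon$, $x_3 = 2\varepsilon$ on the real line. Then the pairwise distances are $\|x_1-x_2\| = \|x_2-x_3\| = \varepsilon$ and $\|x_1-x_3\| = 2\varepsilon$, so the matrix $M_{i,j} = k(\|x_i - x_j\|)$ is
\[
M = \begin{pmatrix} 1 & p_1 & p_2 \\ p_1 & 1 & p_1 \\ p_2 & p_1 & 1 \end{pmatrix},
\]
using $k(0) = 1$, $k(\varepsilon) = p_1$, $k(2\varepsilon) = p_2$. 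Since $k$ is positive definite, $M \succeq 0$; in particular $\det M \geq 0$ and all principal minors are nonnegative.

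The key step is to extract the right inequality from $M \succeq 0$. The cleanest route is to test $M$ against the vector $v = (1, -\sqrt{2}, 1)^{\mathsf T}$ (or, up to scaling, $(1,-t,1)$ optimized over $t$): computing $v^{\mathsf T} M v$ gives $1 + 2 + 1 - 2\sqrt{2}\,p_1 - 2\sqrt{2}\,p_1 + 2p_2 = 4 + 2p_2 - 4\sqrt{2}\,p_1 \geq 0$, which yields $p_2 \geq 2\sqrt{2}\,p_1 - 2$ — not quite the claimed bound. So instead I would use the vector $v = (1, -t, 1)^{\mathsf T}$ and optimize: $v^{\mathsf T} M v = 2 + t^2 + 2p_2 - 4t p_1 \geq 0$ for all $t$ forces the discriminant condition $(2p_1)^2 \le \tfrac{1}{2}\cdot 2(2+2p_2)$... let me instead simply note that $v^{\mathsf T} M v \ge 0$ for every $t$ means the quadratic $t^2 - 4p_1 t + (2 + 2p_2)$ in $t$ is nonnegative, hence its discriminant is $\le 0$: $16 p_1^2 - 4(2 + 2p_2) \le 0$, i.e. $4p_1^2 \le 2 + 2p_2$, i.e. $p_2 \ge 2p_1^2 - 1$. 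That is exactly the claim.

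So the steps, in order, are: (i) write down the three collinear points and the resulting $3\times 3$ matrix $M$; (ii) invoke positive definiteness of $k$ to get $M \succeq 0$, hence $v^{\mathsf T} M v \ge 0$ for the one-parameter family $v = (1,-t,1)^{\mathsf T}$; (iii) expand $v^{\mathsf T} M v$ as a quadratic in $t$, observe it is nonnegative for all real $t$, and conclude the discriminant is nonpositive; (iv) rearrange to obtain $p_2 \ge 2p_1^2 - 1$. There is essentially no obstacle here — the only thing to be careful about is the bookkeeping in expanding $v^{\mathsf T} M v$ and the sign in the discriminant inequality; everything else is immediate. One could equivalently phrase step (iii)–(iv) via Schur complements or via $2\times 2$ minors after a rank-one reduction, but the quadratic-form argument is the most transparent.
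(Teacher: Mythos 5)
Your proof is correct and follows essentially the same route as the paper: both build the $3\times 3$ kernel matrix from the points $0,\varepsilon,2\varepsilon$ and extract the inequality from its positive semidefiniteness. The only (cosmetic) difference is that the paper factors the determinant as $(1-p_2)(1+p_2-2p_1^2)\ge 0$ and uses $p_2\le 1$, whereas you test the quadratic form against $(1,-t,1)$ and use the discriminant; both are standard and equivalent here.
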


\begin{proof} taking $x_0 = 1, x_1 = \varepsilon, x_2 = 2\varepsilon$, the
  matrix
  \[
   M=
  \left[ {\begin{array}{ccc}
   1 & p_1 & p_2 \\
   p_1 & 1 & p_1 \\
   p_2 & p_1 & 1 \\
  \end{array} } \right]
  \]
must be positive definite, hence its determinant is positive. 
But $|M| = (1-p_2)(1+p_2 - 2p_1^2)$. 
Since $1-p_2 \geq 0$, the result follows.
\end{proof}
While useful, Lemma \ref{lemma:k2} does not always suffice to provide lower bounds
on the values of $p_l$ -- actually, not even $p_2$. 
In order to see this, assume $p_1 = 0.7$. 
Then, Lemma \ref{lemma:k2} provides the bound $p_2 \geq -0.02$. 
Since $p_2$ must be positive, this provides a lower bound of 0; and indeed, the corresponding kernel matrix
\[
\left[ {\begin{array}{ccc}
   1 & 0.7 & 0   \\
   0.7 & 1 & 0.7 \\
   0   & 0.7 & 1 \\
  \end{array} } \right]
\]
is positive definite. 
Now, the positivity and monotonicity constraints imply,
obviously, that $p_3 = 0$ as well. 
However, the corresponding kernel matrix, constructed using the points $0,\varepsilon,2\varepsilon,3\varepsilon$ equals
\[
\left[ {\begin{array}{cccc}
1 & 0.7 & 0 &0 \\
0.7 & 1 & 0.7 & 0\\
0   & 0.7 & 1 & 0.7\\
0   & 0   & 0.7& 1 \\  
\end{array} } \right]
\]
which is \emph{not} positive definite, having a strictly negative eigenvalue
($\approx -0.1326$).

Hence, the lower bound provided by Lemma \ref{lemma:k2} is non-optimal (unless
$\varepsilon > \frac{1}{3}$, in which case $3\varepsilon > 1$ and is not a valid point).

This non-optimality of this bound is due to its ``restricted view'', so to say;
it is derived using only the points $0,\varepsilon,2\varepsilon$ and their corresponding kernel matrices, while ignoring points with larger coordinates (e.g. $3\varepsilon$).

The lesson of this simple analysis is that better lower bounds on $p_l$
can be obtained by looking at larger kernel matrices, involving more and more points.
For example, let us look at a typical range of parameters: $\varepsilon=0.01, c=0.99$, and try to bound $k(0.1)$ from below. 
In principle, a bound can be found by solving the following
optimization problem:
\[
  \text{argmin } {p_{10}} \text{ subject to} \hspace*{2mm} A_i \succeq 0, i = 1,2... \ldots n.
\]
where $\succeq 0$ stands for positive definiteness, and $A_i$ is the kernel matrix
constructed using the values $p_l$ for $l=1 \ldots i$. 
Alas,that will result in very large kernel matrices, which renders the optimization
problem non-practical (for example, optimizing over determinants of
matrices which are larger than $10 \times 10$ is very difficult, due both to
the high degree of the resulting polynomials and their sheer size). 
Moreover, the large matrices encountered in the optimization process turn out to be extremely ill-conditioned.

A more tractable solution is to use a hierarchy of bounds. 
For example, $p_{10}$ can be bounded as follows: first, take $\varepsilon=0.01$, and bound $p_5$. 
Then look at  $\varepsilon = 0.05$, and bound $p_2$ (which is equivalent
to bounding $p_{10}$ for $\varepsilon=0.01$).
\subsection{Bounding \texorpdfstring{$p_5$}{p5}}
In order to bound $p_5$, we looked at the following combinations
of points: $\{0,\varepsilon,2\varepsilon\}$,$\{0,\varepsilon,2\varepsilon,3\varepsilon\}$,
$\{0,4\varepsilon,5\varepsilon\}$, $\{0,2\varepsilon,4\varepsilon\}$. 
The determinants of the resulting kernel matrices are:
\begin{align}
    & \left( { p_2}-1 \right)  \left( 2\,{{ p_1}}^{2}-{p_2}-1 \right), \nonumber\\
    &   \left( {{ p_1}}^{2}+2\,{ p_1}\,{ p_2}-{ p_1}\,{ p_3}+{{ 
    p_2}}^{2}-{ p_1}-{ p_3}-1 \right) \nonumber \\
    & \left( {{ p_1}}^{2}-2\,{ p_1
    }\,{ p_2}-{ p_1}\,{ p_3}+{{ p_2}}^{2}+{ p_1}+{ p_3}-1 \right),  \nonumber \\
    & 2\,{ p_1}\,{ p_4}\,{ p_5}-{{ p_1}}^{2}-{{ p_4}}^{2}-{{ p_5
    }}^{2}+1, \nonumber \\
    & \left( { p_4}-1 \right)  \left( 2\,{{ p_2}}^{2}-{ p_4}-1 \nonumber
     \right) 
\end{align}
The relative simplicity of these expressions (polynomials of degree at
most three or products of thereof), allowed a closed-form solution
using the Maple\textsuperscript{TM} software package, providing a lower bound of $0.756$
for $p_5$.

It remains to bound $p_2$, assuming $p_1=0.756$. The naive bound
$2p_1^2-1$ yields $0.15477$, however, when the points
$\{3\varepsilon,4\varepsilon\}$ and the corresponding kernel matrices
are also considered (that is, we bound $k(2\varepsilon)$ using also
the values $k(3\varepsilon),k(4\varepsilon)$), then a much better bound
is obtained, $0.284$.

We note here that, to the best of our knowledge, no better bounds exist;
however, ostensibly such bounds can be found, by refining the optimization
process. 

We next provide a different approach for bounding the kernel values,
using a discrete approximation to $k()$. This approach typically
produces better bounds,  and also allows to construct approximately optimal kernels.
\subsection{Bounding the kernel values using a discrete Fourier approximation}  \label{sec:bochner1d}  
In this approach, we derive a lower bound on $p_l$ by directly
optimizing over the kernel $k(t)$, while imposing requirements 1-6 described in
this section's first part.
Alas, optimization techniques such as the calculus of variations lead
to unsolvable equations (for example, the monotonicity constraint is
equivalent to imposing overall negativity on the derivative, a problem
which is generally intractable). Still, we can obtain good approximations
by using the common approach of discretizing $k(t)$, a technique which is used
in numerous problems. Here, we discretize both the Fourier
transform $K(u)$ and the spatial coordinate $t$, as follows.

\begin{enumerate}[rightmargin=1cm]
\item Define a Fourier approximation to $k(t)$ as
  $\displaystyle \sum_{l=0}^n a_l \cos(l\pi t)$.
\item Impose the condition $k(0)=1$ by the linear
  equality $\displaystyle \sum_{l=0}^n a_l=1$.
  \item Impose the condition $k(\varepsilon)=c$ by the linear
  equality $\displaystyle \sum_{l=0}^n a_l \cos(l\pi \varepsilon)=c$.
\item Impose (approximate) monotonicity by demanding that the derivatives at a
  dense grid, e.g. \allowbreak $\{0,0.001 \ldots 0.999\}$, are all negative. Note
  that these are linear inequalities in the sought coefficients
  $a_l, l=0 \dots n$. Exact monotonicity can be directly checked on the resulting solution.
\item Impose positive definiteness of the kernel by the conditions
  $a_l \geq 0, l=0 \dots n$.
\item Impose positivity of the kernel by $k(1) \geq 0$, another
  linear constraint (this suffices, since $k(t)$ is monotonically
  decreasing).
\item under the constraints in 2-6 above, minimize $k(\delta)$,
  which is, also, a linear function of $a_l \geq 0, l=0 \dots n$.  
\end{enumerate}

Thus, the problem of finding an optimal kernel for our purpose has
been approximated by a linear programming problem. Increasing $n$ and
the grid over which the condition $k'(t) \leq 0$ is imposed, improves
the approximation.

Many experiments have indicated that in the "normal" range of parameters,
the optimal kernel is nearly equal to the simpler Gaussian kernel; this is next demonstrated.
\subsection{Examples of optimal kernels}
Below are two figures, comparing the basic GRF
to two optimal kernels, for given values of $\varepsilon, \delta, c$.
\medskip
\paragraph{Fig. \ref{fig:gaussian vs optimal}} 
The optimal and Gaussian kernels, for  $\varepsilon=0.01,c=0.99,\delta=0.1$.
    Note that the Gaussian kernel (blue) is very close to the optimal one (red), although (as can be expected) the optimal kernel achieves a slightly lower value at $\delta$.
    
\paragraph{Fig. \ref{fig:gaussian exp minus 5}}An example in which the optimal kernel is markedly better than the Gaussian:
    $\varepsilon=0.3,c=0.6,\delta=0.7$. Its "odd" shape notwithstanding, the red curve depicts a viable, doubly positive kernel. Interestingly, note that sharp "drop" in the kernel just before $\delta$, in which the optimal kernel  achieves a much lower value than the Gaussian one.
    
    However, the choice of parameters in Fig. \ref{fig:gaussian exp minus 5} is less appropriate for the practical decision problems we aim to solve in this paper; and typical values yield optimal kernels which are rather close to the basic  Gaussian kernel.

\begin{figure}
    \centering
    \begin{minipage}{0.45\textwidth}
        \centering
        \includegraphics[scale=0.4]{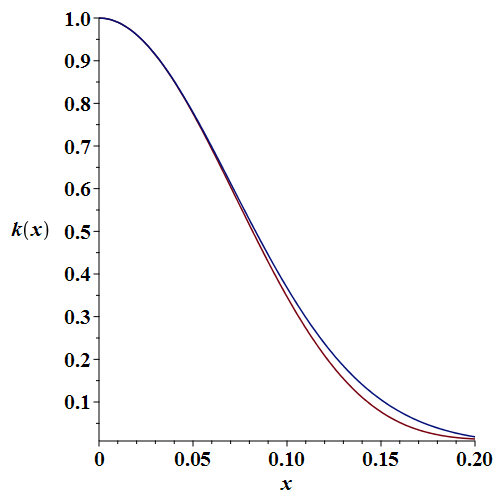}
        \caption{Comparison of the optimal (red) vs. Gaussian kernels (blue), for $\varepsilon=0.01,c=0.99,\lambda=0.1$. }
        \label{fig:gaussian vs optimal}
    \end{minipage}\hfill
    \begin{minipage}{0.45\textwidth}
        \centering
        \vspace{0.2cm}
        \includegraphics[scale=0.42]{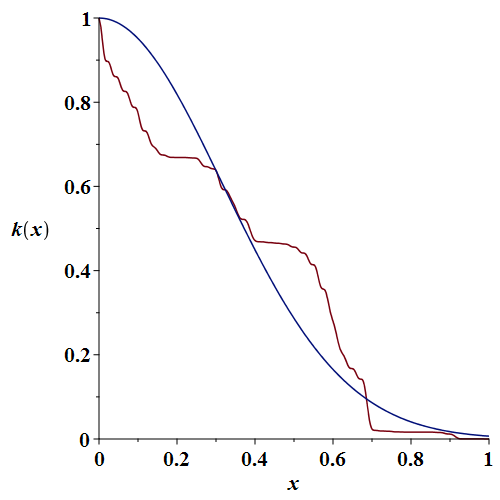}
        \caption{An example of an optimal kernel (red) which substantially differs from the Gaussian kernel (blue), for $\varepsilon=0.3,c=0.6,\lambda=0.7$.}
        \label{fig:gaussian exp minus 5}
    \end{minipage}
\end{figure}

\section{Optimal Kernels in Higher Dimensions} \label{section:higher_dimenstion}
In this section, we extend the approach of Section \ref{sec:bochner1d} to more than one variable, with the Fourier expansion being rvarepsilonlaced by its multivariate generalization, the Dini series. 
\subsection{Statement of the Problem}
Given a dimension $d \geq 1$ and parameters $c,\varepsilon \in [0,1]$, we denote by $\K_d(c,\varepsilon)$ the class of kernels $k:\R_+\to \R$ such that

\begin{enumerate}
\item $k(t)\geq 0$ for all $t\geq 0$;
\item $k$ is continuous and monotonically decreasing in the interval $[0,1]$;
\item The map $x\in \R^d \mapsto k(\|x\|)$ is positive semidefinite\footnote{We will often say that $k$ is positive semidefinite in $\r^d$.}, that is, for any collection of vectors $\{x_j\}_{j=1}^N\subset \R^d$ the matrix $[k(\| x_i - x_j\||)]_{i,j=1,...,N}$ is positive semidefinite;
\item $k(\varepsilon)=c$.
\end{enumerate}
Then, given $\delta\in (\varepsilon,1)$, we want to minimize the quantity $k(\delta)$ over $\K_d(c,\varepsilon)$. We let
$$
\kappa_d(c,\varepsilon,\delta) = \inf_{k\in \K_d(c,\varepsilon)} k(\delta).
$$

\begin{definition}
We say that $k_0\in \K_d(c,\varepsilon)$ is a global minima for $\delta$ if $ k(\delta)\geq k_0(\delta)$ for all $k \in \K_d(c,\varepsilon)$. In addition, we say $k$ is a unique if $ k(\delta)= k_0(\delta)$ implies $k=k_0$.
\end{definition}

Classical results of Bochner \cite{positive_positivedefinite} and Schoenberg \cite{Scho42} show that: 
\begin{enumerate}
\item These classes are nested: $\K_{d+1}(c,\varepsilon) \subset \K_{d}(c,\varepsilon)$.
\item $k:\r_+ \to \r$ is positive semidefinite and $k(0)=1$ if and only if
$$
k(\|x\|) = \int_{\r^d} e^{ix \cdot y} d\mu(y) 
$$
for all $x\in \r^d$, where $\mu$ is some probability measure in $\r^d$.
\item If we let $\K_{\infty}(c,\varepsilon):=\bigcap_{d\geq 1} \K_{d}(c,\varepsilon)$, then $k$ must be of the form
\begin{align}\label{eq:gauss-rvarepsilon}
k(\|x\|) = \int_0^\infty e^{-\la \|x\|^2} d\mu(\la) 
\end{align}
for some probability measure $\mu$ on $\r_+$.
\end{enumerate}

\begin{proposition}
Let $\kappa_\infty(c,\varepsilon,\delta) = \inf_{k\in \K_\infty(c,\varepsilon)} k(\delta)$. Then
$$
\kappa_\infty(c,\varepsilon,\delta)=c^{(\del/\varepsilon)^2}
$$
and the kernel $k(t)=e^{- t^2 \varepsilon^{-2}\log(1/c)}$ is the unique global minima.
\end{proposition}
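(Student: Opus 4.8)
The plan is to combine the Schoenberg representation of $\K_\infty(c,\varepsilon)$ (item~(3) in the list preceding the statement) with a single application of Jensen's inequality after an appropriate change of variables.

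First I would record the representation: by item~(3), every $k\in\K_\infty(c,\varepsilon)$ can be written as $k(t)=\int_0^\infty e^{-\lambda t^2}\,d\mu(\lambda)$ for some probability measure $\mu$ on $\R_+$. Set $s:=(\delta/\varepsilon)^2$; since $\varepsilon<\delta$ we have $s>1$. Introduce $u(\lambda):=e^{-\lambda\varepsilon^2}\in(0,1]$. Then the constraint $k(\varepsilon)=c$ reads $\int_0^\infty u(\lambda)\,d\mu(\lambda)=c$, while $e^{-\lambda\delta^2}=u(\lambda)^s$ gives $k(\delta)=\int_0^\infty u(\lambda)^s\,d\mu(\lambda)$. (We may assume $c\in(0,1)$; the extreme values of $c$ are degenerate.)

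Next I would apply Jensen's inequality. The map $\phi(u)=u^s$ is strictly convex on $[0,1]$, since $\phi'(u)=su^{s-1}$ is strictly increasing for $s>1$. Hence, using that $\mu$ is a probability measure,
\[
k(\delta)=\int_0^\infty \phi\big(u(\lambda)\big)\,d\mu(\lambda)\;\geq\;\phi\!\left(\int_0^\infty u(\lambda)\,d\mu(\lambda)\right)=\phi(c)=c^{s}=c^{(\delta/\varepsilon)^2},
\]
which proves $\kappa_\infty(c,\varepsilon,\delta)\geq c^{(\delta/\varepsilon)^2}$. To see the bound is attained, take $\mu$ to be the unit point mass at $\lambda_0:=\varepsilon^{-2}\log(1/c)$; the corresponding kernel is $k_0(t)=e^{-\lambda_0 t^2}=e^{-t^2\varepsilon^{-2}\log(1/c)}$. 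One checks directly that $k_0\in\K_\infty(c,\varepsilon)$: it is positive and monotonically decreasing, it is positive semidefinite in every dimension by the representation above, and $k_0(\varepsilon)=e^{-\log(1/c)}=c$. Since $k_0(\delta)=e^{-\delta^2\varepsilon^{-2}\log(1/c)}=c^{(\delta/\varepsilon)^2}$, the infimum equals $c^{(\delta/\varepsilon)^2}$ and is achieved by $k_0$.

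Finally, for uniqueness I would examine the equality case in Jensen's inequality. Suppose $k\in\K_\infty(c,\varepsilon)$ satisfies $k(\delta)=c^{(\delta/\varepsilon)^2}$, and let $\mu$ be a representing measure for $k$. Then equality holds in the displayed inequality, so by strict convexity of $\phi$ the random variable $u(\lambda)$ is $\mu$-almost surely constant; since $\lambda\mapsto e^{-\lambda\varepsilon^2}$ is injective on $\R_+$, $\mu$ must be a point mass at some $\lambda_0$, and then $k(t)=e^{-\lambda_0 t^2}$. The constraint $k(\varepsilon)=c$ forces $e^{-\lambda_0\varepsilon^2}=c$, i.e. $\lambda_0=\varepsilon^{-2}\log(1/c)$, whence $k=k_0$. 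There is no genuine obstacle here; the only points requiring a moment's care are verifying that $k_0$ meets all four defining conditions of $\K_\infty(c,\varepsilon)$ and correctly invoking the Schoenberg representation (so that $\mu$ is a bona fide probability measure and, if one wants it to be unique, appealing to injectivity of the Laplace transform on measures), both of which follow from the results quoted in the excerpt.
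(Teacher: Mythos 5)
Your proof is correct and follows essentially the same route as the paper: Schoenberg's representation $k(t)=\int_0^\infty e^{-\lambda t^2}\,d\mu(\lambda)$ followed by Jensen's inequality for the convex map $u\mapsto u^{(\delta/\varepsilon)^2}$, with the equality case forcing $\mu$ to be a Dirac mass at $\lambda_0=\varepsilon^{-2}\log(1/c)$. Your version merely makes the change of variables $u(\lambda)=e^{-\lambda\varepsilon^2}$ explicit and spells out the attainment and uniqueness details slightly more carefully than the paper does.
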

\begin{proof}
By Schoenberg's results \eqref{eq:gauss-rvarepsilon} we have that 
$$
k(t) = \int_0^\infty e^{-\la t^2} d\mu(\la)
$$
for some probability measure $\mu$. In particular, since $\del/\varepsilon > 1$, Jensen's inequality gives 
$$
c^{(\del/\varepsilon)^2} = k(\varepsilon)^{(\del/\varepsilon)^2} = \left(\int_0^\infty e^{-\la \varepsilon^2} d\mu(\la) \right)^{(\del/\varepsilon)^2} \leq \int_0^\infty e^{-\la \del^2} d\mu(\la) = k(\del).
$$
Equality can only be attained above iff the function $\la \mapsto e^{-\la \varepsilon^2}$ is constant in the support of $\mu$, that is, $\mu$ is a Dirac delta at some point $\la_0$. Since $k(\varepsilon)=c$ we must have $\la_0=\varepsilon^{-2}\log(1/c)$. This completes the proposition.
\end{proof}

\subsection{Dini Series and the Parameter space for  \texorpdfstring{$\K_d(c,\varepsilon)$}{Kd(c,eps)}}\label{sec:diniseries}
It will be convenient to assume that our data is contained in a ball of radius $1/2$, hence we can restrict our attention to the values of $k$ for $t\in [0,1]$ only. Moreover, in working with the Dini series the way it is set up, once $k$ is determined via a \emph{finite} Dini series with nonnegative coefficients  in the interval $[0,1]$, we can easily extend $k$ to $\r_+$ using the series itself, since each function  $x \in \r^d \mapsto A_\nu(j_{\nu+1,n} \|x\|)$ is positive definite.

The Dini series\footnote{Sometimes also called Fourier-Bessel series.} of a function $f:[0,1]\to \r$  is given by
\begin{equation}\label{eq:Dini}
f(x)=\BB_0(x)+\sum_{n=1}^\infty c_n J_\nu(\lambda_n x),
\end{equation}
where $0<\lambda_1<\lambda_2<\ldots$ denote the positive zeros 
of the function

\smallskip

\begin{equation}\label{diniserisfunction}
zJ_\nu'(z)+HJ_\nu(z) = (H+\nu)J_\nu(z)-zJ_{\nu+1}(z).
\end{equation}
Here, $J_\nu$ is the Bessel function of the first kind of order $\nu\geq -\frac12$, and $H\in\r$ is some parameter. This Bessel function arises naturally as the Fourier transform of the unit sphere in $\r^d$. Indeed, letting 
$$
A_\nu(z):=\Gamma(\nu+1)(z/2)^{-\nu} J_\nu(z),
$$
we have
\begin{align}\label{id:Asphere}
A_{d/2-1}(\|x\|) = \frac{1}{\text{Area}(\sp^{d-1})}\int_{\sp^{d-1}} e^{i x\cdot y} dy
\end{align}
for all $x\in \r^d$. The Dini series which is most convenient for our purposes is the one with $H=-\nu $, so that equation \eqref{diniserisfunction} is simply $zJ_{\nu+1}(z)=0$ and $\{\la_n\}_{n\geq 1}$ are positive Bessel zeros of $J_{\nu+1}(z)$. Applying the identity \cite[5.11-(8)]{W66}
\begin{equation}\label{besselidentity}
\int_0^1 J_\nu(\al t)J_\nu(\be  t)t d t = \frac{\al J_{\nu+1}(\al)J_{\nu}(\be)-\be J_{\nu}(\al)J_{\nu+1}(\be)}{\al^2-\be^2}.
\end{equation}
for $\al=\la_m$ and $\be=\la_n$ are distinct zeros of \eqref{diniserisfunction}, then one sees that this integral vanishes whenever $m\neq n$. One can then rescale the results from \cite[ 18.33]{W66}, and invoke the identity \cite[5.11-(11)]{W66}, $\int_0^1 A_\nu^2(\lambda_n t) {t^{2\nu+1}} d t= \frac{A_\nu^2(\lambda_n)}2,$ in order to derive the following proposition.

\begin{proposition}\label{prop:diniseries}
Let $\nu\geq -\frac12$. For every measurable $f:[0,1]\to\cp$ with
$$
\|f\|_\nu:=\int_0^1 |f(t)|^2 {t^{2\nu+1}} d t <\infty
$$
 we have that
\begin{equation}\label{eq:DiniSeries}
f(x)=a_0 +  \sum_{n=1}^\infty a_n  {A_\nu(\lambda_n t )}
\end{equation}
with convergence in the norm $\|f\|_\nu$, where $\{\lambda_{n}\}_{n\geq 1}$ denote the positive zeros of the Bessel function $J_{\nu+1}$,
\begin{equation}\label{eq:DiniCoeff}
a_n= \frac{2}{A_\nu(\lambda_n) ^2}\int_0^1 f(t)A_\nu(\lambda_{n} t) {t^{2\nu+1}d t},
\end{equation}
for all $n\geq 1$, and
$a_0=\tfrac{1}{2(\nu+1)}\int_0^1 f(t) {t^{2\nu+1}d t}$.
Moreover, if $f$ is continuous and of bounded variation in $[0,1]$, then the Dini series \eqref{eq:DiniSeries} of $f$ converges absolutely and uniformly in $[0,1]$.
\end{proposition}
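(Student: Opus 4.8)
The plan is to identify the right-hand side of \eqref{eq:DiniSeries} with the expansion of $f$ in the orthogonal system of $L^2\big([0,1],t^{2\nu+1}\,dt\big)$ formed by the constant function together with $\{A_\nu(\la_n\cdot)\}_{n\geq 1}$, and then to import the $L^2$-completeness and the pointwise-convergence theory of Dini (Fourier--Bessel) series from \cite{W66}, specialised to the parameter $H=-\nu$.

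\emph{Orthogonality and normalisation.} First I would verify that this system is orthogonal. Taking $\al=\la_m$, $\be=\la_n$ in the Lommel integral \eqref{besselidentity} and using $J_{\nu+1}(\la_m)=J_{\nu+1}(\la_n)=0$ gives $\int_0^1 J_\nu(\la_m t)J_\nu(\la_n t)\,t\,dt=0$ for $m\neq n$, which, after inserting the appropriate power of $t$, becomes $\int_0^1 A_\nu(\la_m t)A_\nu(\la_n t)\,t^{2\nu+1}\,dt=0$. The constant $1=A_\nu(0)$ is the mode attached to the root $z=0$ of $zJ_{\nu+1}(z)=0$ (the exceptional $\la_0=0$ permitted by $H=-\nu$); its orthogonality to each $A_\nu(\la_n\cdot)$ follows from $\frac{d}{dt}\big(t^{\nu+1}J_{\nu+1}(\la t)\big)=\la\,t^{\nu+1}J_\nu(\la t)$, which gives $\int_0^1 t^{\nu+1}J_\nu(\la_n t)\,dt=\la_n^{-1}J_{\nu+1}(\la_n)=0$ and hence $\int_0^1 A_\nu(\la_n t)\,t^{2\nu+1}\,dt=0$. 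The squared norms are $\int_0^1 t^{2\nu+1}\,dt$ for the constant and, by the Lommel identity \cite[5.11-(11)]{W66}, $\int_0^1 A_\nu^2(\la_n t)\,t^{2\nu+1}\,dt=\tfrac12 A_\nu(\la_n)^2$ for the remaining modes; dividing the inner products $\int_0^1 f(t)A_\nu(\la_n t)\,t^{2\nu+1}\,dt$ and $\int_0^1 f(t)\,t^{2\nu+1}\,dt$ by the respective norms reproduces exactly the coefficients $a_n$ in \eqref{eq:DiniCoeff} and the coefficient $a_0$. Since $f\in L^2([0,1],t^{2\nu+1}\,dt)$, Bessel's inequality yields $\sum_n a_n^2 A_\nu(\la_n)^2<\infty$, so the series \eqref{eq:DiniSeries} converges in $\|\cdot\|_\nu$ to some $g$.

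\emph{Completeness.} It remains to prove $g=f$, equivalently that the system above spans all of $L^2([0,1],t^{2\nu+1}\,dt)$. I would transfer this to the flat space $L^2([0,1],dt)$ through the unitary map $f\mapsto\widetilde f$, $\widetilde f(t):=t^{\nu+\frac12}f(t)$, under which $A_\nu(\la_n t)$ becomes a nonzero multiple of $\sqrt t\,J_\nu(\la_n t)$ and $1$ becomes a multiple of $t^{\nu+\frac12}$. The image is precisely the Dini system of order $\nu$ with $H=-\nu$: the eigenfunctions of the singular Sturm--Liouville problem $-v''+\tfrac{\nu^2-1/4}{t^2}v=\la^2 v$ on $(0,1]$ with the natural (limit-point/limit-circle) condition at $t=0$ and a Robin-type condition at $t=1$ determined by $H$, the value $\la=0$ giving the exceptional eigenfunction $t^{\nu+\frac12}$. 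Its completeness in $L^2[0,1]$ is classical and follows by rescaling \cite[18.33]{W66}. This is the step I expect to be the main obstacle, as it rests on the full spectral theory of a \emph{singular} Sturm--Liouville operator --- notably on the correct treatment of the endpoint $t=0$ and on excluding nonzero $L^2$ functions orthogonal to every $A_\nu(\la_n\cdot)$ and to the constant --- so I would invoke \cite{W66} for it rather than reprove it.

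\emph{Absolute and uniform convergence.} When $f$ is continuous and of bounded variation, uniform convergence of \eqref{eq:DiniSeries} to $f$ on $[0,1]$ follows from the Fourier--Bessel analogue of the Dirichlet--Jordan test in the Dini-series theory \cite[Ch.~XVIII]{W66}, the uniform bound $|A_\nu(\la_n t)|\leq 1$ taking care of the endpoints. Absolute convergence follows from the matching refinement of that theory under the mild additional regularity one typically has in our setting (and is in any case immediate for the \emph{finite} Dini series with nonnegative coefficients used in the sequel, where absolute convergence coincides with the convergence already established).
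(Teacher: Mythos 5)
Your proof follows essentially the same route as the paper's: orthogonality of $\{1\}\cup\{A_\nu(\la_n\cdot)\}$ in $L^2([0,1],t^{2\nu+1}dt)$ from the Lommel integral \eqref{besselidentity}, normalisation from \cite[5.11-(11)]{W66}, and completeness plus uniform convergence for continuous BV functions imported (after rescaling) from \cite[Ch.~XVIII]{W66}. One caveat: your orthogonal-projection computation actually yields $a_0=2(\nu+1)\int_0^1 f(t)\,t^{2\nu+1}\,dt$ (for $\nu=0$ the squared norm of the constant is $\int_0^1 t\,dt=\tfrac12$, so the projection multiplies the inner product by $2$, not by $\tfrac12$), which is the reciprocal of the constant $\tfrac{1}{2(\nu+1)}$ in the statement; your claim that this ``reproduces exactly'' the stated $a_0$ is therefore not literally true --- your value is the consistent one, and the constant in \eqref{eq:DiniCoeff}'s companion formula for $a_0$ appears to be a typo (the two agree only at $\nu=-\tfrac12$, the cosine-series case).
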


\begin{remark}For now on we set $\nu=d/2-1$, $\{j_{\nu,n}\}_{n\geq 1}=\text{Zeros}(J_{\nu}) \cap \r_+$ and $f_{d,n}(t)=A_\nu(j_{\nu+1,n}t)$ for $n\geq 1$ (note that $f_{1,n}(t)=\cos(\pi n t)$). Identity \eqref{id:Asphere} shows that any of the functions $x \in \r^d \mapsto f_{d,n}(\|x\|)$ is indeed positive definite.
\end{remark}

It is then straightforward to deduce the following result.

\begin{proposition}\label{prop:dinipositive}
Let $k:\r_+\to\r$ be a continuous, of bounded variation and positive semidefinite in $\r^d$. Then its Dini series 
$$
k(t)=a_0 + \sum_{n\geq 1} a_n f_{d,n}(t),
$$
for $t\in [0,1]$, has only nonnegative coefficients, that is, $a_n\geq 0$ for all $n\geq 0$.
\end{proposition}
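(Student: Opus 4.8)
The plan is to identify the Dini coefficients of $k$ with the masses that the spectral measure of the positive definite function $x\mapsto k(\|x\|)$ assigns to a family of mutually singular concentric spheres, and then let Bochner's theorem do the rest.

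Set $\nu=d/2-1$. The starting point is that, by identity \eqref{id:Asphere}, each basis function is the Fourier transform of a nonnegative measure: the constant $1=f_{d,0}$ equals $\widehat{\de_0}$, and for $n\ge1$ the function $x\mapsto f_{d,n}(\|x\|)=A_\nu(\la_n\|x\|)$ equals $\widehat{\si_n}$, where $\si_n$ is the uniform probability measure on the sphere $\la_n\sp^{d-1}\subset\r^d$ (for $d=1$ this is $\tfrac12(\de_{\la_n}+\de_{-\la_n})$, and $\widehat{\si_n}(x)=\cos(\la_n x)=f_{1,n}(x)$). Now form the signed measure $\om:=a_0\de_0+\sum_{n\ge1}a_n\si_n$. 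This is a genuine finite signed measure: by Proposition \ref{prop:diniseries} the Dini series of $k$ converges absolutely and uniformly on $[0,1]$, and evaluating the absolute convergence at $t=0$, where $f_{d,n}(0)=A_\nu(0)=1$ for every $n$, yields $\sum_{n\ge1}|a_n|<\infty$, so $\om$ has finite total variation. Transforming term by term (legitimate by the same absolute convergence) gives $\widehat\om(x)=a_0+\sum_{n\ge1}a_n A_\nu(\la_n\|x\|)=k(\|x\|)$ for all $x\in\r^d$, using here that $k$ is, by the convention of Section \ref{sec:diniseries}, extended off $[0,1]$ precisely by its own Dini series. Since $x\mapsto k(\|x\|)$ is positive semidefinite by hypothesis, Bochner's theorem says it is also the Fourier transform of a \emph{nonnegative} finite measure; by uniqueness of the Fourier transform on finite measures we conclude $\om\ge 0$.

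Finally one reads off the coefficients. The zeros $0<\la_1<\la_2<\cdots$ of $J_{\nu+1}$ are distinct and positive, so the sets $\{0\},\ \la_1\sp^{d-1},\ \la_2\sp^{d-1},\dots$ are pairwise disjoint; hence $\de_0$ and the $\si_n$ are mutually singular, and $\om(\{0\})=a_0$ while $\om(\la_n\sp^{d-1})=a_n$ for every $n\ge1$. As $\om\ge 0$, all of $a_0,a_1,a_2,\dots$ are nonnegative, which is the assertion. The step I expect to need the most care is the identification $\widehat\om(x)=k(\|x\|)$ on \emph{all} of $\r^d$: it is immediate when the Dini series is finite, and in the infinite case it rests on the absolute convergence above together with the standing convention that $k$ equals its Dini series on $\r_+$, so that $x\mapsto k(\|x\|)$ is the globally convergent series $a_0+\sum_{n\ge1}a_n f_{d,n}(\|x\|)$; once that is in place, the remainder is the soft measure-theoretic argument above. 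A more computational alternative — inserting the Schoenberg representation $k(t)=\int_0^\infty A_\nu(\rho t)\,d\si(\rho)$ into the coefficient formula \eqref{eq:DiniCoeff} and evaluating the resulting integrals $\int_0^1 A_\nu(\rho t)A_\nu(\la_n t)t^{2\nu+1}\,dt$ via the Lommel identity \eqref{besselidentity} — is less attractive, since those integrals are not individually of one sign in $\rho$; it is precisely the disjointness of the sphere supports that makes the signs come out right.
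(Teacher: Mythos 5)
Your strategy (read off the Dini coefficients as the masses that the Bochner spectral measure of $x\mapsto k(\|x\|)$ assigns to the mutually singular spheres $\la_n\sp^{d-1}$) is appealing, but it fails at precisely the step you flag. The identity $\widehat\om(x)=k(\|x\|)$ is only available for $\|x\|\le 1$: Proposition \ref{prop:diniseries} represents $k$ by its Dini series on $[0,1]$ and nowhere else, whereas the hypothesis concerns a function $k$ already defined on all of $\r_+$ whose positive semidefiniteness depends on its values for $t>1$. The ``standing convention'' of Section \ref{sec:diniseries} runs in the opposite direction --- it is used to \emph{build} kernels on $\r_+$ out of finite nonnegative Dini data --- and cannot be invoked for an arbitrary positive semidefinite $k$, whose Dini series will in general disagree with $k$ on $(1,\infty)$. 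Hence $\widehat\om$ and the Bochner measure $\mu$ of $k(\|\cdot\|)$ are finite measures whose Fourier transforms agree only on the unit ball, and such measures need not coincide; the conclusion $\om=\mu\ge 0$ does not follow. (The paper itself offers no proof beyond ``straightforward to deduce,'' so there is no argument there to fall back on.)

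Moreover the gap cannot be closed without strengthening the hypotheses, because the statement as written is false. Take $d=1$ (so $\nu=-\tfrac12$, $A_\nu(z)=\cos z$, $\la_n=n\pi$, and the Dini series is the half-range cosine series) and $k(t)=\cos(5\pi t/2)$, which is continuous, of bounded variation, and positive semidefinite on $\r$, being the Fourier transform of $\tfrac12(\de_{5\pi/2}+\de_{-5\pi/2})$. Then \eqref{eq:DiniCoeff} gives $a_1=2\int_0^1\cos(5\pi t/2)\cos(\pi t)\,dt=\frac{\sin(7\pi/2)}{7\pi/2}+\frac{\sin(3\pi/2)}{3\pi/2}=-\frac{2}{7\pi}-\frac{2}{3\pi}<0$. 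The statement becomes true (and your argument, as well as the direct one, goes through) if one adds a hypothesis forcing the coefficient integral to see all of $\r_+$, e.g.\ that $k$ vanishes on $[1,\infty)$: then $a_n=\frac{2}{A_\nu(\la_n)^2}\int_0^\infty k(t)A_\nu(\la_n t)t^{2\nu+1}dt$ is, by \eqref{id:Asphere}, a positive multiple of $\widehat{k(\|\cdot\|)}$ evaluated on the sphere of radius $\la_n$, hence nonnegative by Bochner; equivalently, under that hypothesis $\widehat\om=\widehat\mu$ holds on all of $\r^d$ and your singularity argument applies. You should either add such a hypothesis or note explicitly that the proposition requires one.
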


Observing that for all $0<t<1$ we have $|f_{d,n}(t)|< f_{d,n}(0)=1$, we conclude that if $k\in \K_d(c,\varepsilon)$ then 
\begin{align}\label{formalternative}
k(t)=1 - \sum_{n\geq 1} a_n (1-f_{d,n}(t)).
\end{align}
Moreover, letting 
$
g_{d,n}(t)=\frac{1-f_{d,n}(t)}{1-f_{d,1}(t)}
$
we conclude that
\begin{align}\label{id:crucial}
k(t) = 1 - (1-c)\frac{1-f_{d,1}(t)}{1-f_{d,1}(\varepsilon)} + {(1-f_{d,1}(t))}\sum_{n\geq 2} a_n (g_{d,n}(\varepsilon)-g_{d,n}(t)).
\end{align}
An important thing to notice is that the function 
$$
\tilde k(t) = 1 - (1-c)\frac{1-f_{d,1}(t)}{1-f_{d,1}(\varepsilon)}
$$
belongs to $\K_d(c,\varepsilon)$ iff 
$$
c \geq \frac{f_{d,1}(\varepsilon)-f_{d,1}(1)}{1-f_{d,1}(1)}.
$$
To see this, it is easy to show using basic facts about Bessel functions that $f_{d,1}(t)$ is decreasing and that $f_{d,1}(1)<0$, and so $k_1$ is decreasing and the condition above is just $\tilde k(1)\geq 0$.

\section{Numerical Evaluation} \label{section:numerical_eval}
\subsection{Heuristic}\label{Heuristic}
 Identity \eqref{id:crucial} shows that if $g_{d,n}(\ep)> g_{d,n}(\delta) $ for all $n\geq 2$ then
$$
k(\delta) \geq \tilde k(\delta),
$$
and the kernel $\tilde k(t)$ is the unique global optimal. 
Numerically, one can see that the functions $g_{d,n}(t)$ decay fast when $n\geq 2$ and $n = O(d)$ (especially for large $d$).  
These functions are essentially decreasing for all $t\in [0,1]$ and this indicates there is no improvement in taking $a_n>0$ for $n=O(d)$. Moreover,  for $n \gg d$, the functions $g_{d,n}$ start to oscillate rapidly and it becomes very difficult to enforce the condition $k'(t)\leq 0$, so there is no advantage in setting $a_n>0$ for $n\gg d$ either. This explains what we observe in our numerical experiments of Section ~\ref{section:numerical_eval}: for small $\ep$ and large $\delta$, it is almost always the case that the kernel $\tilde k(t)$ is optimal, since the monotonicity of the $g_{d,n}$ sets $a_n \approx 0$ for $n=2,..,O(d)$ and the condition $k'(t)\leq 0$ forces the remaining $a_n$ to vanish.

Let $\E_d(c)$ be the inverse of the function $\ep \mapsto c=\frac{f_{d,1}(\ep)-f_{d,1}(1)}{1-f_{d,1}(1)}$. Note that $c\mapsto \E_d(c)$ is decreasing with $\E_d(0)=1$ and $\E_d(1)=0$. In particular, given $c$, the kernel 
$$
\tilde k(t) = 1 - (1-c)\frac{1-f_{d,1}(t)}{1-f_{d,1}(\ep)}
$$
belongs to $\K_d(c,\ep,\del)$ exactly when $\ep\geq \E_d(c)$. For most practical purposes we set $c\approx 0.9$. In the following table we estimate some values of $\E_d(c)$ for $c=0.9$

\hspace{-0.3cm}
\begin{center}
\begin{tabular}{ |c| c |c| c|c|c|c|c|c|c|c| }
 \hline $d$ &  1 & 2&3&4&5&6&7&8&9&10 \\  \hline
 $\E_d(0.9)$ &   0.2048 & 0.1991 & 0.1938 & 0.1890 & 0.1845 & 0.1803 & 0.1764 & 0.1728 & 0.1694 & 0.1661  \\  \hline
\end{tabular}
\end{center}
\medskip
It appears that the function $\E_d(c)$ decreases with $d$ for every fixed $c$.
This table shows that in low dimensions taking $\ep = 0.21$ for a $0.9\leq c\leq 1$ guarantees that $\tilde k(t) \in \K_d(c,\ep,\del) $. We have performed numerous numerical experiments for $d=1,...,10$, $\ep \gtrapprox \E_d(c)$, several different values of $c \lessapprox 1$ and $ \delta > \ep$, using the following setup.
\bigskip
\begin{mdframed}[nobreak=true,align=center]
{\bf Linear Programming Setup.} Given $N,M\geq 1$ 
\begin{align}\label{lp}
\begin{split}
\text{Minimize} \quad  \sum_{n= 0}^N a_n f_{d,n}(\delta)  \quad : \quad & a_n\geq 0 \ \ \text{for each } n=0,...,N; \quad  \sum_{n= 0}^N a_n=1; \\
& \sum_{n= 0}^N a_n f_{d,n}(\ep)=c; \quad 
 \sum_{n= 0}^N a_n f_{d,n}(1) \geq 0; \\
& \sum_{n= 1}^N a_n f_{d,n}'(m/M) \leq 0 \ \ \text{for each } m=0,...,M.
\end{split}
\end{align}
\end{mdframed}
Derivatives can be computed by the formula $f_{d,n}'(t) = -j_{\nu+1,n}^2 t A_{\nu+1}(j_{\nu+1,n} t) /d$. Good results only appear when we take $M> N^2$. In all cases, the kernel $\tilde k$ seem to be the optimal choice, confirming our heuristic assumption.
\begin{conjecture}[Heuristics]\label{conj}
If $c\lessapprox 1$ and $\ep\gtrapprox \E_d(c)$ then the kernel
$$
\tilde k(t) = 1 - (1-c)\frac{1-f_{d,1}(t)}{1-f_{d,1}(\ep)}
$$
is the unique global minima of $\K_d(c,\ep)$ for any $\del \in (\ep,1]$.
\end{conjecture}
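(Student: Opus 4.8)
The proof hinges on identity~\eqref{id:crucial}. Writing an arbitrary $k\in\K_d(c,\ep)$ in its Dini series $k=1-\sum_{n\ge1}a_n(1-f_{d,n})$ with all $a_n\ge0$ (Proposition~\ref{prop:dinipositive}, in the form~\eqref{formalternative}), identity~\eqref{id:crucial} gives, for every $\del\in(\ep,1]$,
\[
k(\del)-\tilde k(\del)=\bigl(1-f_{d,1}(\del)\bigr)\sum_{n\ge2}a_n\bigl(g_{d,n}(\ep)-g_{d,n}(\del)\bigr).
\]
Since $f_{d,1}$ is strictly decreasing with $f_{d,1}(1)<0$ (elementary Bessel facts), the factor $1-f_{d,1}(\del)$ is strictly positive on $(\ep,1]$, so the whole statement collapses to showing
\[
\sum_{n\ge2}a_n\bigl(g_{d,n}(\ep)-g_{d,n}(\del)\bigr)\ \ge\ 0
\]
for every admissible coefficient sequence $(a_n)_{n\ge2}$, with equality forcing $a_n=0$ for all $n\ge2$; uniqueness then follows from the displayed identity and $1-f_{d,1}(\del)>0$. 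The easy half is to check that $\tilde k\in\K_d(c,\ep)$ in the stated regime: this is the computation already indicated after~\eqref{id:crucial} --- $\tilde k$ is a nonnegative constant plus a positive multiple of the positive-definite function $f_{d,1}$, hence positive semidefinite in $\r^d$; it is decreasing because $f_{d,1}$ is; it equals $c$ at $\ep$ by construction; and its positivity on $[0,1]$ is the single inequality $\tilde k(1)\ge0$, i.e.\ exactly $\ep\ge\E_d(c)$.

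For the reduced inequality I would split the sum at a cutoff frequency $n_0\asymp 1/\ep$ (so $n_0=O(d)$ when $\ep\gtrapprox\E_d(c)$). For the low modes $2\le n\le n_0$ the plan is a termwise bound: show that $g_{d,n}$ attains its maximum over $[\ep,1]$ at the left endpoint, whence $g_{d,n}(\ep)\ge g_{d,n}(\del)$ for all $\del$. In $d=1$ this is explicit, since $g_{1,n}(t)=\bigl(U_{n-1}(\cos(\pi t/2))\bigr)^2$ with $U_{n-1}$ the Chebyshev polynomial of the second kind (so $g_{1,2}=4\cos^2(\pi t/2)$ is decreasing, $g_{1,3}=(4\cos^2(\pi t/2)-1)^2$ decreases and then rises only to $g_{1,3}(1)=1\le g_{1,3}(\ep)$ once $\ep\le\tfrac{1}{2}$, and so on), and one checks that for $\ep$ in the stated range $g_{1,n}(\ep)$ dominates the decreasing sequence of interior bump heights; in general one runs the same comparison through $f_{d,n}(t)=A_\nu(j_{\nu+1,n}t)$ and the interlacing and monotonicity of Bessel functions and their zeros. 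For the high modes $n>n_0$ the quantity $g_{d,n}(\ep)-g_{d,n}(\del)$ is typically negative, so no termwise bound is available and one must use all the constraints at once: the normalization $\sum_{n\ge2}a_n g_{d,n}(\ep)\le\frac{1-c}{1-f_{d,1}(\ep)}$ (equivalent to $a_1\ge0$), the positivity bound $\sum_{n\ge1}a_n(1-f_{d,n}(1))\le1$, and --- above all --- the monotonicity constraint $\sum_{n\ge1}a_n\,j_{\nu+1,n}^2\,A_{\nu+1}(j_{\nu+1,n}t)\ge0$ for all $t\in(0,1)$, coming from $f_{d,n}'(t)=-j_{\nu+1,n}^2 t\,A_{\nu+1}(j_{\nu+1,n}t)/d$. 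The plan is to dualize the resulting linear program: produce $\lambda_1,\lambda_2\ge0$ and a nonnegative measure $\mu$ on $(0,1)$ such that, for every $n\ge2$,
\[
g_{d,n}(\ep)-g_{d,n}(\del)+\lambda_1 g_{d,n}(\ep)+\lambda_2\bigl(1-f_{d,n}(1)\bigr)-\int_0^1 j_{\nu+1,n}^2 A_{\nu+1}(j_{\nu+1,n}t)\,d\mu(t)\ \ge\ 0,
\]
the $O(j_{\nu+1,n}^2)$ weight in the last term being what should absorb the bounded deficit $g_{d,n}(\del)-g_{d,n}(\ep)$ once $n\gg d$.

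This last step is the crux, and where I expect the real difficulty. The heuristic quoted just before the conjecture --- that $k'(t)\le0$ ``forces the remaining $a_n$ to vanish'' --- is not literally true: monotonicity only forces the high-frequency $a_n$ to be \emph{small}, not zero. A pressure point: in $d=1$, for $\del$ near $\tfrac{1}{2}$ one has $g_{1,n}(\tfrac{1}{2})=1$ for every odd $n$, while $g_{1,n}(\ep)=\bigl(U_{n-1}(\cos(\pi\ep/2))\bigr)^2$ can be pushed arbitrarily close to $0$ by choosing an odd $n$ with $n\ep$ near an even integer. Such a tiny positive $a_n$ --- which barely changes $a_1$ and, for $a_n$ small enough (of order $a_1/n^2$), respects the monotonicity constraint --- stays compatible with all the requirements of $\K_d(c,\ep)$, yet a short direct computation shows it makes $k(\del)<\tilde k(\del)$; so the conjecture as literally stated appears to need an additional hypothesis. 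Any correct proof must therefore trade the $j_{\nu+1,n}^2$-weighted monotonicity constraint against the $O(1)$ deficit in a genuinely quantitative way, uniformly in $n,d,c$ and $\del$. I would expect the correct statement to confine $\del$ away from these resonances (for instance $\del/\ep$ bounded, or $\del$ bounded away from $\tfrac{1}{2}$ and its analogues), with optimality of $\tilde k$ then provable by making this trade-off explicit; pinning down that sub-range and establishing optimality there is where essentially all of the work lies.
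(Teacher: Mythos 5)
The first thing to note is that the paper does not prove this statement: it is stated explicitly as a \emph{conjecture}, supported only by the heuristic discussion of Section \ref{Heuristic} and by the linear-programming experiments \eqref{lp}. So there is no proof of the paper's to compare yours against. Your proposal faithfully reconstructs the paper's own reasoning --- the reduction via \eqref{id:crucial} to the sign of $\sum_{n\geq 2}a_n\bigl(g_{d,n}(\ep)-g_{d,n}(\del)\bigr)$, the verification that $\tilde k\in\K_d(c,\ep)$ exactly when $\ep\geq\E_d(c)$, the termwise bound $g_{d,n}(\ep)\geq g_{d,n}(\del)$ for the low modes, and the appeal to the monotonicity constraint to control the high modes --- and then, like the paper, stops short of the decisive step. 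The proposed LP-duality certificate for the modes $n\gg d$ is a sensible plan but is nowhere executed, and you concede that the paper's claim that $k'\leq 0$ ``forces the remaining $a_n$ to vanish'' is not literally true (it only forces them to be small). As a proof, therefore, the proposal is incomplete; it is a sharpened restatement of the heuristic, not a demonstration.

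That said, your ``pressure point'' is the most valuable part of the proposal. In $d=1$ with $\del=\tfrac12$ one has $g_{1,n}(\tfrac12)=1-\cos(n\pi/2)=1$ for every odd $n$, while $g_{1,n}(\ep)=\sin^2(n\pi\ep/2)/\sin^2(\pi\ep/2)$ can be made nearly $0$ by choosing an odd $n$ with $n\ep$ near an even integer (e.g.\ $n=19$ with $\ep=0.21\gtrapprox\E_1(0.9)$). A short computation indicates that adding such a mode with coefficient $a_n$ of order $a_1/n^2$ preserves $a_1\geq 0$, $k(1)\geq 0$ and $k'\leq 0$ on $[0,1]$, yet strictly decreases $k(\tfrac12)$ below $\tilde k(\tfrac12)$ by an amount of order $a_n$. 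If this survives scrutiny it refutes the conjecture as literally stated (the improvement is of order $10^{-3}$, which may explain why the LP experiments, run on a finite grid and for particular $(\ep,\del)$ pairs, did not flag it), and any correct version must either exclude such resonances in $\del$ or trade the $j_{\nu+1,n}^2$-weighted derivative constraint against the $O(1)$ deficit quantitatively --- precisely the step neither you nor the paper carries out. I would encourage you to turn the $(d,c,\ep,\del)=(1,0.9,0.21,0.5)$ example into a rigorous counterexample, or to isolate the sub-range of parameters on which your duality argument can actually be closed.
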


The extreme case when $\ep \ll \E_d(c)$ seems to be quite challenging, even in the desired range of parameters such as $c\approx .9$ and $\del \approx 1$. It seems that one can actually make $k(t)=0$ for $t\geq \delta$ at the expense of $\ep$ being very small,  see Figure \ref{fig1}.

\begin{figure}
\centering
\begin{tikzpicture}[scale=.9]

\begin{axis}[
    at={(0.1\linewidth,0cm)}, xmin=0, xmax=1,
    ymin=0, ymax=1,
    xtick={0,0.1,0.3,0.5,0.7,0.9,1},
    ytick={.2,.4,.6,.8,1},
    xlabel={$(d,c,\ep,\del)=(1,0.99,0.01,0.10)$}
]
\addplot[color=blue,mark size=1.0pt] table {Examples/1_99_1_10};
\end{axis}

\begin{axis}[
    at={(0.61\linewidth,0cm)}, xmin=0, xmax=1,
    ymin=0, ymax=1,
    xtick={0,0.1,0.3,0.5,0.7,0.9,1},
    ytick={.2,.4,.6,.8,1},
    xlabel={$(d,c,\ep,\del)=(1,0.60,0.30,0.70)$}
]
\addplot[color=blue,mark size=1.0pt] table {Examples/1_60_30_70};
\end{axis}

\begin{axis}[
    at={(0.1\linewidth,7cm)}, xmin=0, xmax=1,
    ymin=0, ymax=1,
    xtick={0,0.1,0.3,0.5,0.7,0.9,1},
    ytick={.2,.4,.6,.8,1},
    xlabel={$(d,c,\ep,\del)=(5,0.90,0.10,0.70)$}
]
\addplot[color=blue,mark size=1.0pt] table {Examples/5_90_10_70};
\end{axis}

\begin{axis}[
    at={(0.61\linewidth,7cm)}, xmin=0, xmax=1,
    ymin=0, ymax=1,
    xtick={0,0.1,0.3,0.5,0.7,0.9,1},
    ytick={.2,.4,.6,.8,1},
    xlabel={$(d,c,\ep,\del)=(5,0.99,0.001,0.10)$}
]
\addplot[color=blue,mark size=1.0pt] table {Examples/5_99_1_10};
\end{axis}

\begin{axis}[
    at={(0.1\linewidth,14cm)}, xmin=0, xmax=1,
    ymin=0, ymax=1,
    xtick={0,0.1,0.3,0.5,0.7,0.9,1},
    ytick={.2,.4,.6,.8,1},
    xlabel={$(d,c,\ep,\del)=(10,0.90,0.10,0.70)$}
]
\addplot[color=blue,mark size=1.0pt] table {Examples/10_90_10_70};
\end{axis}

\begin{axis}[
    at={(0.61\linewidth,14cm)}, xmin=0, xmax=1,
    ymin=0, ymax=1,
    xtick={0,0.1,0.3,0.5,0.7,0.9,1},
    ytick={.2,.4,.6,.8,1},
    xlabel={$(d,c,\ep,\del)=(10,0.99,0.01,0.10)$}
]
\addplot[color=blue,mark size=1.0pt] table {Examples/10_99_1_10};
\end{axis}

\end{tikzpicture}
\caption{Examples of kernels for varying dimensions and parameters.}
\label{fig1}
\end{figure}
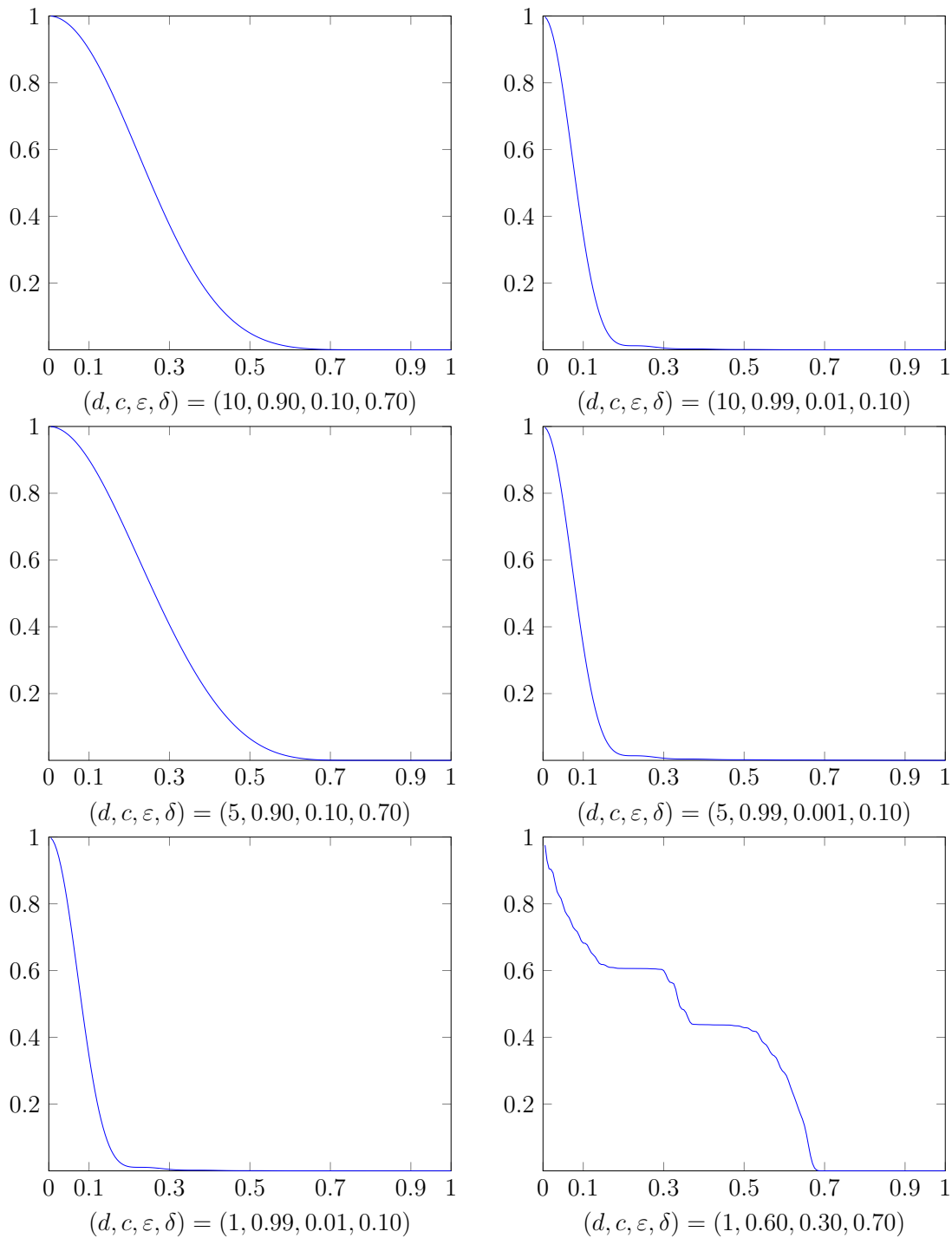

\section{Conclusions} \label{section:conclusions}

We have presented a probabilistic hashing method, based on random fields, which allows, for some parameter values, to differentiate between "clusterable" and "non-clusterable" data. The method easily extends to distributed, dynamic environments, requiring only a constant communication overhead, and maintaining a degree of privacy.

Future work will study further field types, as well as solving decision problems for other geometric properties (e.g intrinsic dimensionality). We also plan to study "softer" definitions of properties $M,C$ (e.g. which still hold if a certain percentage of the data is modified).

\bibliographystyle{abbrv}
\bibliography{main}

\begin{thebibliography}{10}

\bibitem{adler2009random}
R.~J. Adler and J.~E. Taylor.
\newblock {\em Random fields and geometry}.
\newblock Springer Science \& Business Media, 2009.

\bibitem{testingofclustering}
N.~Alon, S.~Dar, M.~Parnas, and D.~Ron.
\newblock Testing of clustering.
\newblock {\em {SIAM} Rev.}, 46(2):285--308, 2004.

\bibitem{DBLP:conf/stoc/AlonMS96}
N.~Alon, Y.~Matias, and M.~Szegedy.
\newblock The space complexity of approximating the frequency moments.
\newblock In G.~L. Miller, editor, {\em Proceedings of the Twenty-Eighth Annual {ACM} Symposium on the Theory of Computing, Philadelphia, Pennsylvania, USA, May 22-24, 1996}, pages 20--29. {ACM}, 1996.

\bibitem{general_expression_grf_max_jean_mario}
J.-M. Azaïs and M.~Wschebor.
\newblock A general expression for the distribution of the maximum of a gaussian field and the approximation of the tail.
\newblock {\em Stochastic Processes and their Applications}, 118(7):1190--1218, 2008.

\bibitem{brain_fields}
R.~Bansal, D.~X. Lawrence H.~Staib, H.~Zhu, and B.~S. Peterson.
\newblock Statistical analyses of brain surfaces using gaussian random fields on 2-d manifolds.
\newblock {\em IEEE Transactions on Medical Imaging}, 26:46--57, 2006.

\bibitem{Polya}
C.~Berg and G.~Forst.
\newblock Potential theory on locally compact abelian groups.
\newblock In {\em Potential Theory on Locally Compact Abelian Groups}, pages 8--30. Springer-Verlag Berlin Heidelberg, 1975.

\bibitem{helly_theorems_propertytesting}
S.~Chakraborty, R.~Pratap, S.~Roy, and S.~Saraf.
\newblock Helly-type theorems in property testing.
\newblock In A.~Pardo and A.~Viola, editors, {\em LATIN 2014: Theoretical Informatics}, pages 306--317, Berlin, Heidelberg, 2014. Springer Berlin Heidelberg.

\bibitem{CreaseyLang+2018+1+11}
P.~E. Creasey and A.~Lang.
\newblock Fast generation of isotropic gaussian random fields on the sphere.
\newblock {\em Monte Carlo Methods and Applications}, 24(1):1--11, 2018.

\bibitem{DBLP:conf/compgeom/DatarIIM04}
M.~Datar, N.~Immorlica, P.~Indyk, and V.~S. Mirrokni.
\newblock Locality-sensitive hashing scheme based on p-stable distributions.
\newblock In J.~Snoeyink and J.~Boissonnat, editors, {\em Proceedings of the 20th {ACM} Symposium on Computational Geometry, Brooklyn, New York, USA, June 8-11, 2004}, pages 253--262. {ACM}, 2004.

\bibitem{efimov2016integral}
A.~Efimov, M.~Gaal, and S.~G. Revesz.
\newblock On integral estimates of non-negative positive definite functions, 2016.

\bibitem{fl1985}
P.~Flajolet and G.~N. Martin.
\newblock Probabilistic counting algorithms for data base applications.
\newblock {\em J. Comput. Syst. Sci.}, 31(2):182–209, sep 1985.

\bibitem{DBLP:journals/jmlr/Genton01}
M.~G. Genton.
\newblock Classes of kernels for machine learning: {A} statistics perspective.
\newblock {\em J. Mach. Learn. Res.}, 2:299--312, 2001.

\bibitem{positive_functions_positive_fourier_giraud}
B.~Giraud and R.~Peschanski.
\newblock On positive functions with positive fourier transforms.
\newblock {\em Acta Physica Polonica B}, 37, 05 2005.

\bibitem{dirac_fourier}
B.~Giraud and R.~Peschanski.
\newblock From ``dirac combs’’ to fourier-positivity.
\newblock {\em Acta Physica Polonica B}, 47(4):1075, 2016.

\bibitem{positivity_of_fourier_giraud2014}
B.~G. Giraud and R.~Peschanski.
\newblock On the positivity of fourier transforms, 2014.

\bibitem{randomfields_physics}
E.~Hernández-Lemus.
\newblock Random fields in physics, biology and data science.
\newblock {\em Frontiers in Physics}, 9:77, 2021.

\bibitem{positive_positivedefinite}
A.~Hinrichs and J.~Vybíral.
\newblock On positive positive-definite functions and bochner’s theorem.
\newblock {\em Journal of Complexity}, 27(3):264--272, 2011.
\newblock Dagstuhl 2009.

\bibitem{DBLP:conf/stoc/IndykM98}
P.~Indyk and R.~Motwani.
\newblock Approximate nearest neighbors: Towards removing the curse of dimensionality.
\newblock In J.~S. Vitter, editor, {\em Proceedings of the Thirtieth Annual {ACM} Symposium on the Theory of Computing, Dallas, Texas, USA, May 23-26, 1998}, pages 604--613. {ACM}, 1998.

\bibitem{Lin2011}
Z.~Lin and Z.~Bai.
\newblock {\em Inequalities Related to Characteristic Functions}, pages 23--28.
\newblock Springer Berlin Heidelberg, Berlin, Heidelberg, 2011.

\bibitem{numerical_bounds_cecile}
C.~Mercadier.
\newblock Numerical bounds for the distributions of the maxima of some one- and two-parameter gaussian processes.
\newblock {\em Advances in Applied Probability}, 38(1):149--170, 2006.

\bibitem{Scho42}
O.~R. Musin.
\newblock Multivariate positive definite functions on spheres.
\newblock {\em Unkown}, 1942.

\bibitem{pr8010024}
K.~E. Pilario, M.~Shafiee, Y.~Cao, L.~Lao, and S.-H. Yang.
\newblock A review of kernel methods for feature extraction in nonlinear process monitoring.
\newblock {\em Processes}, 8(1), 2020.

\bibitem{DBLP:books/lib/RasmussenW06}
C.~E. Rasmussen and C.~K.~I. Williams.
\newblock {\em Gaussian processes for machine learning}.
\newblock Adaptive computation and machine learning. {MIT} Press, 2006.

\bibitem{slepian}
D.~Slepian.
\newblock The one-sided barrier problem for gaussian noise.
\newblock {\em The Bell System Technical Journal}, 41(2):463--501, 1962.

\bibitem{slepian1963zeros}
D.~Slepian.
\newblock On the zeros of gaussian noise.
\newblock {\em Time series analysis}, pages 104--115, 1963.

\bibitem{DBLP:journals/dcg/Verger-Gaugry05}
J.~Verger{-}Gaugry.
\newblock Covering a ball with smaller equal balls in r\({}^{\mbox{n}}\).
\newblock {\em Discret. Comput. Geom.}, 33(1):143--155, 2005.

\bibitem{W66}
G.~N. Watson.
\newblock A treatise on the theory of bessel functions.
\newblock In {\em A Treatise on the Theory of Bessel Functions}. Cambridge University Press, Cambridge, 1966.

\end{thebibliography}

\end{document}